\theoremstyle{plain}
\newtheorem{thm}{Theorem}
\newtheorem{lem}[thm]{Lemma}
\newtheorem{cor}[thm]{Corollary}
\theoremstyle{definition}
\newtheorem{defn}{Definition}
\newcommand{\ket}[1]{|#1\rangle}
\newcommand{\bra}[1]{\langle #1|}
\newcommand{\bracket}[2]{\langle #1|#2\rangle}
\newcommand{\ketbra}[2]{|#1\rangle\langle #2|}
\begin{document}

\preprint{MIT-CTP/4692}

\title{Doubly infinite separation of quantum information and communication}

\author{Zi-Wen Liu}\email{zwliu@mit.edu}
\affiliation{Center for Theoretical Physics and Department of Physics, Massachusetts Institute of Technology, Cambridge, Massachusetts 02139, USA}
\author{Christopher Perry}
\affiliation{Department of Physics and Astronomy, University College London, Gower Street, London WC1E 6BT, United Kingdom}

\author{Yechao Zhu}
\affiliation{Center for Theoretical Physics and Department of Physics, Massachusetts Institute of Technology, Cambridge, Massachusetts 02139, USA}

\author{Dax Enshan Koh}%
\affiliation{%
Department of Mathematics, 
Massachusetts Institute of Technology, Cambridge, Massachusetts 02139, USA
}%

 \author{Scott Aaronson}
\affiliation{Computer Science and Artificial Intelligence Laboratory, 
Massachusetts Institute of Technology, Cambridge, Massachusetts 02139, USA
}

\date{\today}

\begin{abstract}

We prove the existence of (one-way) communication tasks with a subconstant versus superconstant asymptotic gap, which we call ``doubly infinite,'' between their quantum information and communication complexities. We do so by studying the exclusion game [C. Perry {\it et al.}, Phys. Rev. Lett. {\bf 115}, 030504 (2015)] for which there exist instances where the quantum information complexity tends to zero as the size of the input $n$ increases.  By showing that the quantum communication complexity of these games scales at least logarithmically in $n$, we obtain our result. We further show that the established lower bounds and gaps still hold even if we allow a small probability of error. However in this case, the $n$-qubit quantum message of the zero-error strategy can be compressed polynomially.
\end{abstract}

\pacs{}
\maketitle


\section{Introduction}
\def\EXC{{\rm EXC}}

The field of communication complexity, originated by Yao's 1979 seminal work \cite{yao}, aims to study the minimum amount of communication needed for multiple distributed parties to accomplish a given communication task. Such tasks are typically formalized as follows: Players are given private inputs and asked to solve some computational problems based on them. To do this, some communication will have to take place in the form of exchanging messages. 

While such models were originally considered in the context of classical protocols, it has since been realized that quantum resources, e.g., quantum communication channels (players are allowed to exchange quantum states instead of classical messages), may offer significant advantage. For example, there exist tasks for which quantum strategies can consume exponentially less communication than any classical one, even without shared entanglement \cite{Raz,fp,yjk,bcw,gkk,Montanaro}.
These results sparked interest in further characterizing which tasks exhibit distinctions between quantum and classical communication protocols, and what kind of distinctions there can be. 
The vast majority of previous work in this field was carried out in the constant bounded-error setting. 
Here we shall focus on a scenario where the allowed probability of error is zero or vanishingly small.


Recently, a peculiar type of one-way communication task between two players Alice and Bob, namely the exclusion game, was introduced by Perry, Jain, and Oppenheim (PJO) \cite{comm}: Alice randomly draws an $n$-bit string $x$, and Bob randomly draws some subset $y \subseteq [n]$, where $|y|=m$, both from uniform distributions. Alice then sends a single message regarding her input to Bob. They win the game if Bob outputs a string $z$ that is different from $x$ restricted to the bits specified by $y$. A particular exclusion game can be denoted by $\EXC_{n,m,\gamma}$, where $\gamma$ is the allowed probability of error. Comparing to the conventional bounded-error tasks of computing functions, exclusion games exhibit some special properties. They are relational tasks: multiple outputs can be accepted for one certain input; and they are extremely sensitive to error: if $\gamma \geq 2^{-m}$, then no communication is required as Bob can succeed at his task by guessing a string at random.
In Ref.~\cite{comm}, PJO demonstrated a new kind of quantum-classical separation: They devised a zero-error quantum strategy that only reveals vanishingly small amount of information regarding Alice's input for the exclusion games with large $m$, while any zero-error classical strategy must reveal almost everything. More formally, there is an infinite gap between the quantum and classical information complexities (the minimum amount of information regarding Alice's private input that needs to be revealed) of these exclusion games. 



In this paper, we further analyze the complexities of different exclusion games, and exhibit several features.
The PJO strategy requires that exactly $n$ qubits be sent from Alice to Bob, i.e., the communication cost is $n$. Since the amount of information actually revealed is vanishingly small, an interesting question that naturally arises is as to how much we can possibly reduce the communication cost. 
For zero-error exclusion games with $m$ scaling strongly sublinearly in $n$, we show that any winning quantum strategy can be classically simulated with at most exponential overhead. Combining with the linear lower bound on the classical communication complexity, we establish a logarithmic lower bound on the quantum communication complexity of these games. As a result, there is an at least subconstant versus logarithmic gap between the quantum information and communication complexities of the exclusion games for which both sides hold simultaneously (they exist). That is, a vanishingly small amount of extractable information must be carried by a diverging amount of communication for these tasks. This gap is an example of doubly infinite gaps, which we shall motivate and define later.

Next, we extend our analysis to the cases where error may be allowed ($\gamma>0$). By slightly different arguments, we show that the overhead of classically simulating a quantum strategy is still at most exponential for small $\gamma$. Furthermore,
for $\gamma\leq(n+1)^{-m}$, we show that the classical communication complexity is still at least linear, thus the logarithmic lower bound on the quantum communication complexity and the doubly infinite gap between the quantum information and communication complexities of certain exclusion games hold. 

The significance of the doubly infinite gap between quantum information and communication complexities may be compared with its classical counterpart. For constant non-zero probability of error, the gap between classical information and communication complexities is at most exponential for any communication task \cite{gkr,Braverman}. For zero and asymptotically vanishing probability of error, the largest known gap is constant versus superconstant (``singly infinite'') and occurs for the equality function \cite{Braverman}. Our results may lead to a better understanding of the relation between information and communication complexities, which is a major objective of recent research in the field of communication complexity (in both classical and quantum settings).

We should mention that with regards to the gaps between quantum and classical communication complexities, it was shown in Ref.\ \cite{kremer} that for computing functions in the bounded-error model (without shared entanglement or randomness), they can be at most exponential. As the exclusion game is a relational problem and the interesting separations occur only when the probability or error is zero or tends to zero asymptotically, the arguments of Ref.\ \cite{kremer} cannot be directly applied here.
Our results indicate that the conclusion holds for almost all exclusion games. However, it indeed remains open as to whether the gap can be superexponential for those games with $m$ scaling linearly in $n$.

In addition, we show that $\gamma\geq(n+1)^{-m}$ allows the $n$-qubit quantum communication of the PJO strategy to be compressed at least polynomially. Most of our results are summarized in Fig.~\ref{excf}.
\begin{figure}
\centering
\includegraphics[width=\columnwidth]{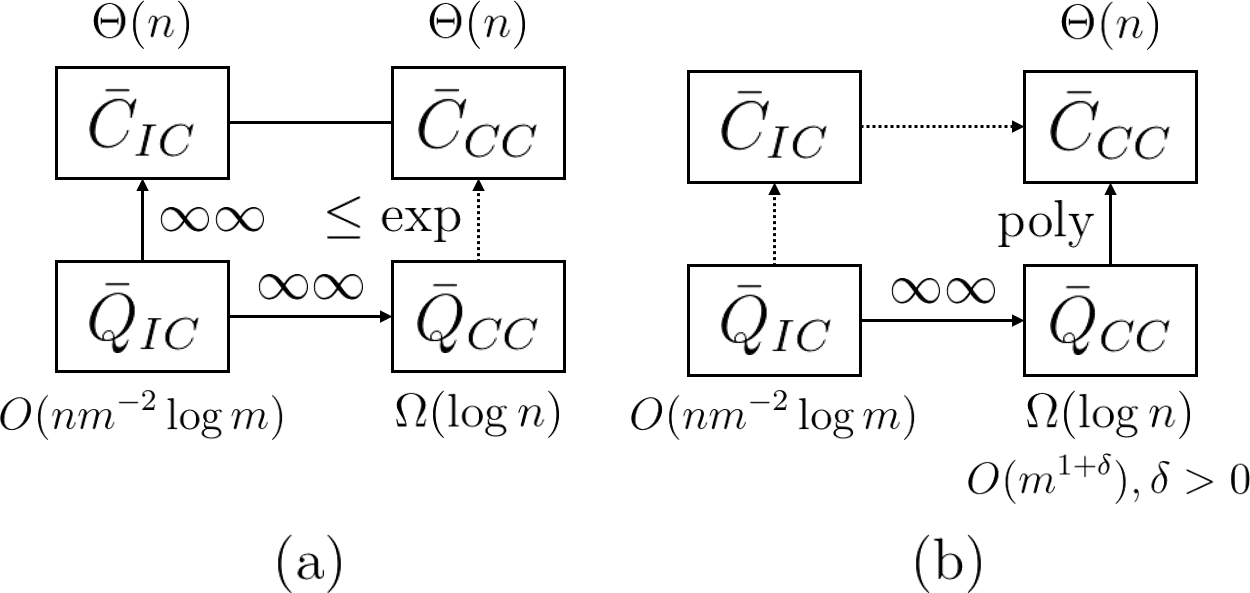}
\caption{Complexities of $\EXC_{n,m,\gamma}$ with (a) $m\in\omega(\sqrt{n\log n})$, $m\in \tilde o(n)$, $\gamma=0$; (b) $m\in\omega(\sqrt{n\log n})$, $m\in \tilde o(n)$, $\gamma=(n+1)^{-m}$. Solid arrows denote established gaps (pointing towards the larger complexity), while the dashed ones denote unknown gaps. ``$\infty\infty$'' means doubly infinite.}
\label{excf}
\end{figure}

\section{Complexities of communication tasks}



Two types of information-theoretic quantities associated with a certain communication task are of great interest and importance in our context, namely, the communication \cite{yao,Kushilevitz} and information \cite{cywy} complexities. Here, we formally define them.


The communication cost of a $\lambda$ protocol $\Pi$ [where $\lambda = C$ (classical) or $=Q$ (quantum) in our context], denoted by $\lambda_{CC}(\Pi)$, is defined to be the maximum number of bits or qubits that are exchanged in any run of the protocol, where the maximum is taken over all inputs and the value of any randomness used.

The information cost of a $\lambda$ protocol $\Pi$, denoted by $\lambda_{IC}(\Pi)$, aims to measure the amount of information regarding the players' inputs revealed by $\Pi$. Here, we consider one-way protocols, i.e., the communication is only from Alice to Bob.
Suppose that $X$ and $Y$ are, respectively, Alice and Bob's inputs, distributed according to a joint distribution $\mu$. Then
$\lambda^{\mu}_{IC}\left(\Pi\right)= I\left(X:\Pi|Y\right)$, where $\Pi$ on the right-hand side essentially denotes the message exchanged during the protocol together with the public randomness used, and $I(S:T|U) = H(SU)+H(TU)-H(STU)-H(U)$ measures the mutual information between $S$ and $T$ given knowledge of $U$ \cite{ic}. The distribution-independent information cost is then defined to be $\lambda_{IC}(\Pi)=\sup_\mu \lambda^{\mu}_{IC}(\Pi)$ \cite{Braverman}. 

Complexities measure the minimum possible amount of certain costs that need to take place to accomplish the task, where the minimization is taken over all winning protocols. The $\lambda$ communication complexity of a task $\Xi$ is then defined to be $\bar\lambda_{CC}(\Xi) = \inf_{\Pi_\Xi} \lambda_{CC}(\Pi_\Xi)$, where $\Pi_\Xi$ are all winning $\lambda$ protocols for $\Xi$. The distribution-dependent and distribution-independent $\lambda$ information complexities of one-way tasks are defined similarly.

We emphasize that these quantities of interest are only associated with the communication between players. Players themselves can have unlimited access to any kind of local resources.

\section{Infinite asymptotic gaps}

We are interested in the limiting behaviors of complexities as the size of the task $n$ tends to infinity.
Throughout this paper, we adopt the standard notation to describe asymptotic complexities. In addition to the widely used $O,o,\Omega,\omega$ (Bachmann-Landau) symbols (formal definitions can be found in, e.g., Ref.\ \cite{knuth}), the following soft symbols are also used when needed. For example, $\tilde O(n)$ (soft-$O$) means $O(n\,{\rm polylog}\, n)$, i.e., $O(n\log^k n)$ for some $k$, while $\tilde o(n)$ (soft-$o$) means $o(n\,{\rm polylog}\, n)$, i.e., $o(n\log^k n)$ for any $k$. Soft-$\Omega$ and soft-$\omega$ are defined analogously.

Now, we introduce the notion of infinite asymptotic gaps and discuss different types of such gaps in an intuitive manner. Formal definitions are left to Appendix \ref{app:gap}.
The gap between two asymptotic complexities is normally characterized by a type of increasing monotone function. For example, there is a quadratic gap between $O(\sqrt{n})$ and $\Omega(n)$, and an exponential gap between $O(\log n)$ and $\Omega(n)$. However, when one side is $\omega(1)$, i.e., superconstant (or $o(1)$, i.e., subconstant), while the other side is not, the gap between them grows faster than any such monotone function. We regard such gaps as infinite.
In fact, all (positive) asymptotic complexities belong to one of the following three classes: $o(1)$, $\Theta(1)$, or $\omega(1)$. In the logarithmic scale, these three types of asymptotic complexities tend to negative infinity, constant, and positive infinity, respectively. The gap between any two of them is infinite. In particular, an $o(1)$ vs.~$\omega(1)$ gap can be regarded as doubly infinite, whereas an $o(1)$ vs.~$\Theta(1)$ or $\Theta(1)$ vs.~$\omega(1)$ gap is only singly infinite. Evidently, the gap between any two asymptotic complexities cannot be larger than doubly infinite. The general behaviors and comparisons of infinite gaps are illustrated in Fig.~\ref{infgap}.


\begin{figure}
\centering
\includegraphics[width=\columnwidth]{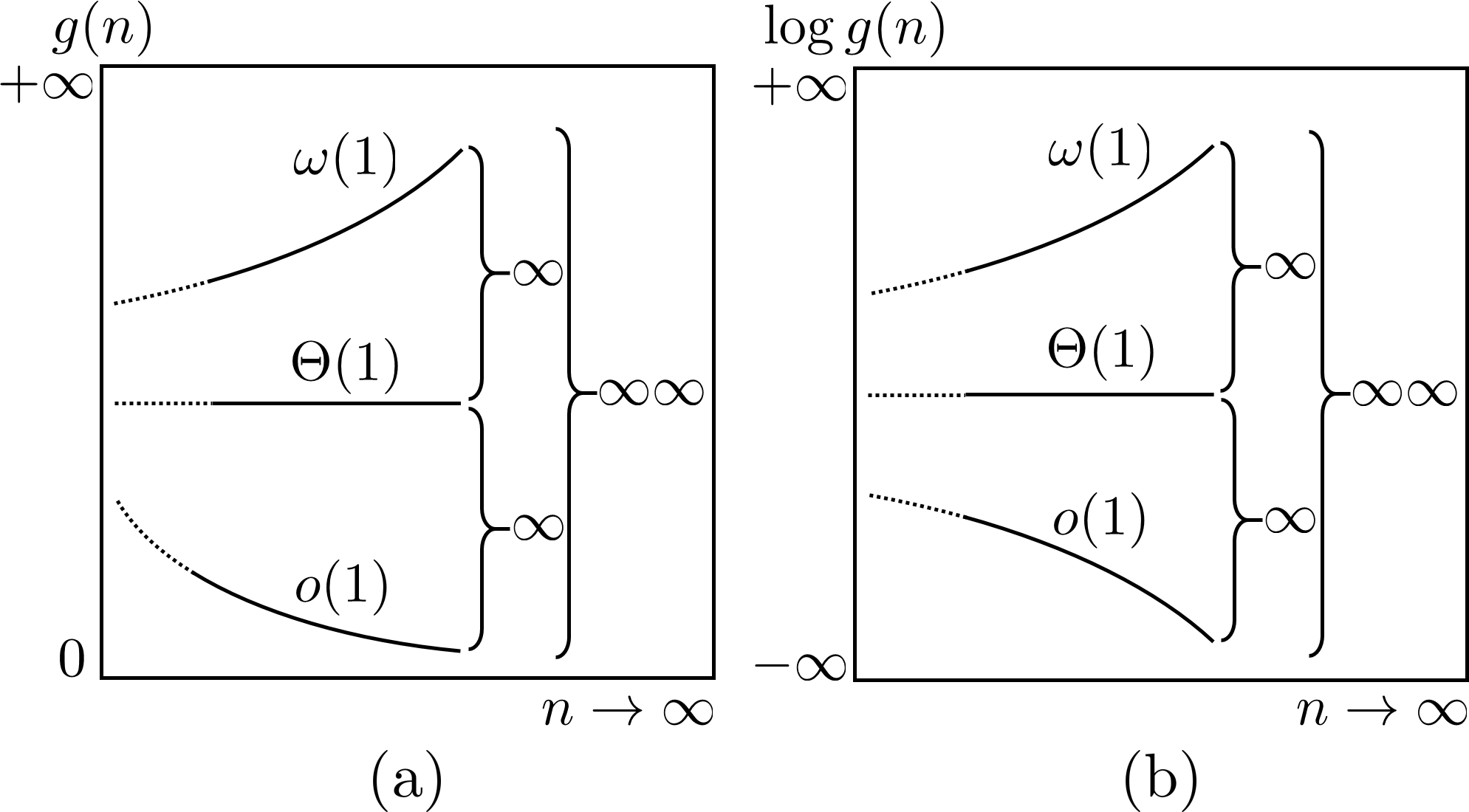}
\caption{Illustrations of infinite gaps in the (a) linear scale and (b) logarithmic scale. ``$\infty$'' means singly infinite; ``$\infty\infty$'' means doubly infinite.}
\label{infgap}
\end{figure}



\section{Exclusion game}


A communication task between two players is typically defined by a function $f :\{0,1\}^* \times \{0,1\}^* \rightarrow \{0,1\}^*$ ($\{0,1\}^*$ denotes the set of bit strings with arbitrary length): Alice and Bob are respectively given some string $ x\in \{0,1\}^*$ and $ y \in \{0,1\}^*$. They are allowed to exchange messages, and one of them outputs a string $z\in \{0,1\}^*$ in the end. They succeed at the task if $z = f(x,y)$. 

To formally define the exclusion game, a more general framework of communication tasks is needed. The problem is now defined by a relation $R\subseteq\{0,1\}^* \times \{0,1\}^* \times \{0,1\}^*$. Furthermore, we restrict the protocol to be one-way: Alice can send one message to Bob, and Bob outputs an answer. They succeed at the task if $(x,y,z) \in R$.
It is evident that the general framework reduces to the original one if for all $x,y \in \{0,1\}^*$, there exists a unique $z\in \{0,1\}^*$ for which $(x,y,z) \in R$. Generically, relational tasks are those admitting multiple winning outputs for one certain input.

The exclusion game is a relational task defined by the relation $R_{\EXC} = \{(x,y,z)|z \neq \mathcal M_y( x)\}$: Alice's input $x\in\{0,1\}^n$ and Bob's input $y \subseteq [n], |y|=m$ ($y$ can be encoded as a string to conform to the above general framework) are both drawn randomly from uniform distributions, and $\mathcal M_y( x)$ denotes the string given by $x$ restricted to the bits specified by $y$. The winning condition is that Bob's output $ z \neq \mathcal M_y( x)$, for given $x$ and $y$.




PJO devised the following quantum strategy \cite{comm} that wins every exclusion game with certainty, i.e., works for any $\EXC_{n,m,\gamma}$.
Given the input $ x=x_1 \cdots x_n$, Alice encodes each classical bit $x_i$ as the qubit
\begin{equation}
|\phi(x_i;\theta_m)\rangle = \cos \left(\frac {\theta_m} 2\right) |0\rangle + (-1)^{x_i} \sin\left(\frac{\theta_m} 2\right) |1\rangle ,
\end{equation}
where $\theta_m = 2 \tan^{-1} (2^{1/m} - 1)$. The $n$-bit string $ x$ is then encoded as the joint state
\begin{equation}
|{\Phi(x;\theta_m)}\rangle = \bigotimes_{i=1}^n |{\psi(x_i;\theta_m)}\rangle,\label{PBRstate}
\end{equation}
which she sends to Bob via the quantum channel.
Upon receiving the state from Alice, Bob performs a global measurement across the $m$ systems specified by $y$ (denoted by $|\Psi(\mathcal M_y( x);\theta_m)\rangle$). The measurement is given by
\begin{equation}
|{\zeta(z)}\rangle = \frac 1{\sqrt{2^m}} \left(|0\rangle - \sum_{ s \neq  0 } (-1)^{ z\cdot  s} | s\rangle\right).
\end{equation}
As one can verify, $\langle\Phi(\mathcal M_y( x);\theta_m)|{\zeta(\mathcal M_y( x))}\rangle=0$ \cite{exclusion}. That is, Bob always outputs some $ z\neq{\mathcal M_y( x)}$ according to the measurement outcome. Therefore, they win the game with certainty.
 This measurement technique may be described as a conclusive-exclusion measurement. It was first introduced in Ref.\ \cite{Caves}, and was subsequently used to prove the Pusey-Barrett-Rudolph (PBR) theorem \cite{pbr}, a result in the field of quantum foundations that rules out a certain class of $\psi$-epistemic models of quantum mechanics.

The PJO strategy exhibits a striking property: The amount of information Alice actually reveals to Bob (the information cost) tends to zero as $n$ increases, in a certain regime. More specifically, it can be calculated that $Q_{IC} ({\rm PJO}) \leq 2 S(M_Q) \in O(nm^{-2}\log m)$, where $S(M_Q)$ is the von Neumann entropy of the quantum message $M_Q$ (the ensemble of $|{\Phi(x;\theta_m)}\rangle$ where $x$ is an $n$-bit string with each of the $2^n$ possibilities being equally likely) that Alice sends to Bob. When $m\in\omega(\sqrt{n\log n})$, $\lim_{n\rightarrow\infty} Q_{IC}({\rm PJO})=0$. This directly indicates that $\lim_{n\rightarrow\infty}\bar Q_{IC}(\EXC_{n,m,0})=0$ in the specified regime. Note that this actually holds for any prior distribution of inputs \cite{comm}.


\section{Zero-error quantum communication complexity}
Here, we prove an $\Omega(\log n)$ lower bound on $\bar Q_{CC}(\EXC_{n,m,0})$, when $m\in \tilde  o(n)$. That is, there cannot exist any winning quantum strategy whose communication cost scales sublogarithmically in $n$ in this regime.

The main idea of the proof is to approximately simulate any quantum protocol for $\EXC_{n,m,0}$ by a classical protocol with exponential overhead, and show that the task can still be accomplished with zero probability of error. Because of the tiny possible probability of error associated with the exclusion game, the existence of such a simulation is non-obvious and the results of Ref.\ \cite{kremer} cannot be applied, but we show that it can be made to work when $ m\in \tilde o(n) $. Then, lower bounds on the classical communication complexities in this regime would directly imply exponentially smaller lower bounds on the corresponding quantum communication complexities.
The following lemma sets a linear lower bound on the classical communication complexities of almost all exclusion games:
\begin{lem} \label{classical_lower_bound}
For $m\leq\alpha n$ where $0<\alpha<1/2$ is a constant, $\bar C_{CC}(\EXC_{n,m,0}) \in \Omega(n)$. 
\end{lem}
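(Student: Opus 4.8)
The plan is to reduce to deterministic one-way protocols and then recast the zero-error winning condition as a statement about projections of Boolean cubes that can be controlled by the Sauer--Shelah lemma. First I would argue that it suffices to lower bound the deterministic one-way complexity: since a winning protocol must succeed \emph{with certainty}, it must succeed for every fixing of the (public or private) randomness, so fixing any value of the randomness yields a deterministic winning protocol of no larger communication cost. Hence $\bar C_{CC}(\EXC_{n,m,0}) \ge D$, where $D$ denotes the deterministic one-way cost, and it remains to show $D \in \Omega(n)$.

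Next I would characterize exactly when a deterministic protocol wins. In a one-way deterministic protocol Alice's message is a function $M(x)$ and Bob outputs $z(M,y)$; let $S_M = \{x : M(x)=M\}$ be the preimage of a message value. For a fixed $M$ and fixed $y$, Bob must commit to a single $z$ satisfying $z \ne \mathcal M_y(x)$ simultaneously for \emph{every} input $x \in S_M$ consistent with the message. Such a $z$ exists if and only if $\mathcal M_y(S_M) \subsetneq \{0,1\}^m$, i.e.\ $S_M$ fails to project onto the entire cube on the coordinates in $y$. Thus the protocol wins with zero error precisely when, for every message value $M$ and every $m$-subset $y$, the set $S_M$ does not shatter $y$; equivalently, each $S_M$ has VC dimension at most $m-1$.

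The core estimate then follows from Sauer--Shelah: any $S \subseteq \{0,1\}^n$ of VC dimension at most $m-1$ satisfies $|S| \le \sum_{i=0}^{m-1}\binom{n}{i}$. Using $m \le \alpha n$ with $\alpha<1/2$ together with the standard entropy bound $\sum_{i=0}^{\alpha n}\binom{n}{i} \le 2^{H(\alpha)n}$ (with $H$ the binary entropy, which is increasing on $[0,1/2]$), every preimage obeys $|S_M| \le 2^{H(\alpha)n}$. Since the preimages $\{S_M\}$ partition $\{0,1\}^n$, the number of distinct messages is at least $2^n / 2^{H(\alpha)n} = 2^{(1-H(\alpha))n}$, so the maximum message length---and hence the communication cost---is at least $(1-H(\alpha))n$. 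Because $\alpha<1/2$ forces $H(\alpha)<1$, the constant $1-H(\alpha)$ is strictly positive and $D \in \Omega(n)$, which gives the claim.

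I expect the main conceptual step to be the translation of the zero-error exclusion condition into the non-shattering property of each message preimage; once that bridge to Sauer--Shelah is in place, the counting is routine. The one point requiring mild care is the reduction from randomized to deterministic protocols, which relies crucially on the error being \emph{exactly} zero so that a good fixing of the randomness is guaranteed---this is also precisely where the argument would fail for constant-error protocols, consistent with the error-sensitivity of the game noted earlier.
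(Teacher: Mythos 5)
Your proof is correct, but it takes a genuinely different route from the paper's. The paper's proof is essentially a two-step citation: it invokes Theorem 2 of the PJO paper, which states that any zero-error classical strategy $\Pi$ has information cost $C_{IC}(\Pi)\geq n-\log_2\bigl(\sum_{i=0}^{m-1}\binom{n}{i}\bigr)\in\Omega(n)$ for $m\leq\alpha n$, $\alpha<1/2$, and then applies the general inequality $C_{IC}\leq C_{CC}$ to transfer the bound to communication. You instead bound communication directly: derandomize (valid precisely because the error is exactly zero, as you correctly flag), observe that zero-error success forces every message preimage $S_M$ to have a proper projection on every $m$-subset $y$ --- i.e., VC dimension at most $m-1$ --- and then apply Sauer--Shelah together with the entropy bound $\sum_{i\leq\alpha n}\binom{n}{i}\leq 2^{H(\alpha)n}$ to count at least $2^{(1-H(\alpha))n}$ distinct messages. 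Notably, your count reproduces exactly the paper's quantity $n-\log_2\bigl(\sum_{i=0}^{m-1}\binom{n}{i}\bigr)$, which reflects that the same combinatorics (sets of strings admitting a valid exclusion answer for every $y$ are precisely the non-shattering sets) underlies PJO's information-complexity bound. What your approach buys is a self-contained, elementary argument with an explicit constant $(1-H(\alpha))n$ and a transparent view of where zero error is used; what the paper's approach buys is brevity given the reference and, more substantively, the stronger statement $\bar C_{IC}\in\Omega(n)$, which the paper needs anyway to contrast with the vanishing quantum information complexity in its main separation. One minor point of care in your write-up: the fixing of randomness should be taken over values in the support (for continuous randomness, outside the measure-zero union of failing sets over the finitely many inputs), but this does not affect the argument.
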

The proof of this lemma is given in Appendix \ref{app:lemmas}. Note that the applicable regime of this lemma covers $m\in \tilde  o(n)$. This enables us to prove the following result:
\begin{thm}\label{main}
For $m\in \tilde  o(n)$, $\bar Q_{CC}(\EXC_{n,m,0}) \in \Omega(\log n)$. 
\end{thm}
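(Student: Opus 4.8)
The plan is to reduce the quantum communication complexity lower bound to the classical one established in Lemma~\ref{classical_lower_bound}, by showing that any zero-error quantum protocol for $\EXC_{n,m,0}$ can be converted into a classical protocol with only an exponential blowup in communication cost. Suppose Alice sends a $q$-qubit message in some winning quantum strategy, so $\bar Q_{CC}(\EXC_{n,m,0}) = q$. The goal is to produce a classical protocol whose communication cost is $O(2^q \,\mathrm{poly})$ that still wins with zero error, and then invoke $\bar C_{CC} \in \Omega(n)$ to force $2^{O(q)} \in \Omega(n)$, hence $q \in \Omega(\log n)$.

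First I would recall the standard route for simulating a one-way quantum channel classically: instead of sending the $q$-qubit state, Alice's message can be replaced by a classical description of the measurement statistics Bob would obtain. A natural tool is a discretization/net argument together with a \emph{port-based} or \emph{measure-and-reprepare} style teleportation, or more simply a fine $\epsilon$-net over the pure states in the $2^q$-dimensional Hilbert space; Alice sends the index of the net point closest to her state, costing $O(2^q \log(1/\epsilon))$ bits, and Bob runs his exclusion measurement on the reconstructed state. The delicate point is that the exclusion game is \emph{zero-error}, so I cannot tolerate the small perturbation introduced by rounding to a net: even an exponentially small error probability is fatal once it becomes comparable to the intrinsic $2^{-m}$ tolerance of the game. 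This is exactly where the hypothesis $m \in \tilde o(n)$ must be exploited.

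The main obstacle, and the technical heart of the argument, is therefore controlling the approximation error against the game's sharp error threshold. I expect the quantity to track is the probability that the reconstructed state causes Bob to output the forbidden string $\mathcal M_y(x)$; this probability is zero for the exact state, so by continuity it is bounded by (roughly) the trace distance between the true and reconstructed states times the relevant operator norm. Choosing the net fineness $\epsilon$ small enough to drive this below $2^{-m}$ requires $\log(1/\epsilon) \in O(m)$, so the total classical communication becomes $O(2^q \cdot \mathrm{poly}(n,m)) = 2^{O(q)} \cdot n^{O(1)}$ once $m \in \tilde o(n)$ keeps the $\mathrm{poly}$ overhead sub-polynomial relative to the dominant exponential term. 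The condition $m \in \tilde o(n)$ is precisely what guarantees the simulation overhead stays exponential (not worse) while the error stays below threshold.

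Putting the pieces together: the simulation yields $\bar C_{CC}(\EXC_{n,m,0}) \leq 2^{O(\bar Q_{CC}(\EXC_{n,m,0}))} \cdot \mathrm{poly}(n)$. Since $m \in \tilde o(n)$ falls within the regime $m \leq \alpha n$ of Lemma~\ref{classical_lower_bound}, we have $\bar C_{CC}(\EXC_{n,m,0}) \in \Omega(n)$, so $2^{O(\bar Q_{CC})} \cdot \mathrm{poly}(n) \in \Omega(n)$, which rearranges to $\bar Q_{CC}(\EXC_{n,m,0}) \in \Omega(\log n)$, completing the proof. The step I would scrutinize most carefully is the zero-error (or sufficiently-below-$2^{-m}$) guarantee of the classical simulation, since naive state-discretization only gives bounded error; making it compatible with the exclusion game's exponentially small tolerance is where the real work lies.
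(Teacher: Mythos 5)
Your overall architecture is the same as the paper's: replace the $q$-qubit message by a classical description of its amplitudes to accuracy $\bar\epsilon$ with $\log(1/\bar\epsilon)\in O(m+q)$, at cost $2^{O(q)}\cdot O(m+q)$ bits, and then invoke Lemma~\ref{classical_lower_bound}. But there is a genuine gap at exactly the point you defer as ``where the real work lies'': you never specify how Bob turns the approximate description into an output with \emph{zero} probability of error, and the mechanism you actually sketch --- Bob ``runs his exclusion measurement on the reconstructed state'' --- fails. If Bob physically re-prepares the nearest net point $|\tilde\psi\rangle$ and measures it, the forbidden string $\mathcal{M}_y(x)$ occurs with positive (if exponentially small) probability, so the simulated classical protocol is not zero-error and Lemma~\ref{classical_lower_bound}, whose hypothesis is $\gamma=0$, cannot be applied; your parenthetical fallback of error ``sufficiently below $2^{-m}$'' does not rescue this, since no classical lower bound for that error regime is available at this stage of the argument. (Re-preparation is precisely what the paper resorts to in Theorem~\ref{error} for the $\gamma>0$ case, where majority voting makes the error arbitrarily small but never zero.)

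The missing idea, which is the heart of the paper's proof of Theorem~\ref{main}, is a \emph{deterministic} output rule. One first collapses Bob's local strategy (unitaries plus deferred measurement) into an effective POVM $\{P'_z\}$ with exactly $2^m$ elements indexed by the output strings; Bob then \emph{classically} evaluates the approximate Born probabilities $p'_z$ from the normalized rounded amplitudes, and outputs any $z$ with $p'_z\geq 2^{-m}$. With accuracy $\bar\epsilon=2^{-(m+q)}/20$, Lemmas~\ref{pert} and \ref{pert2} give $\sup_z|p_z-p'_z|<2^{-m}$, so $p'_{\mathcal{M}_y(x)}<2^{-m}$ because the true probability vanishes, while a string with $p'_z\geq 2^{-m}$ always exists since the $2^m$ values of $p'_z$ sum to one; the classical protocol is therefore genuinely zero-error, and only then does the contradiction with Lemma~\ref{classical_lower_bound} go through. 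Two smaller repairs: the quantum message may be mixed, which the paper handles by purifying to a $2q$-qubit pure state (giving cost $O\left((m+q)2^{2q}\right)$); and your closing inference ``$2^{O(q)}\cdot\mathrm{poly}(n)\in\Omega(n)$ rearranges to $q\in\Omega(\log n)$'' is invalid for a generic polynomial factor, which could be $\Omega(n)$ on its own --- you must keep the overhead explicitly of the form $O\left((m+q)2^{2q}\right)$ and argue by contradiction: $q\in o(\log n)$ together with $m\in\tilde o(n)$ forces the classical cost into $o(n)$, as the paper does.
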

\begin{proof}
Here, we only sketch the main steps of the proof. Details are given in Appendix \ref{app:zero}. 
Suppose that for $\EXC_{n,m,0}$ with $m\in \tilde  o(n)$, there exists a winning quantum strategy $\Pi_Q$ such that $Q_{CC}(\Pi_Q)\equiv q\in o(\log n)$. Then based on $\Pi_Q$, we can devise a corresponding classical strategy $\Pi_C$ such that $C_{CC}(\Pi_C)\in o(n)$ as follows. Given input $x$, the quantum message that Alice sends to Bob in $\Pi_Q$ can be encoded as a $2q$-qubit pure state $|\psi(x)\rangle$.
First, Alice prepares a classical message $C(|\psi(x)\rangle)$ that approximately encodes $|\psi(x)\rangle=\sum^{2^{2q}}_{j=1} \alpha_j|j\rangle=\sum^{2^{2q}}_{j=1}(b_j+ic_j)|j\rangle$, by registering the real ($b_j$) and imaginary parts ($c_j$) of all amplitudes ($\alpha_j$) to accuracy $2^{-(m+q)}/20$ (the approximations are denoted by $\tilde b_j$ and $\tilde c_j$). It can be shown that $|C(|\psi(x)\rangle)|\in o(n)$, when $m\in \tilde  o(n)$. Alice then sends $C(|\psi(x)\rangle)$ to Bob, whose local strategy can be considered as a positive-operator valued measure (POVM) $\{P_{ z}\}$ with $2^m$ elements, each indicating an $m$-bit output string $ z$.
Bob first normalizes the amplitude vector encoded in $C(|\psi(x)\rangle)$, and then applies Born's rule to compute the approximate probability $p_{ z}$ of obtaining each $ z$. Given the above accuracy of encoding, it can be shown that $p_{\mathcal{M}_y( x)}<2^{-m}$. Therefore, Bob simply outputs a $ z$ such that $p_{ z}>2^{-m}$, which always exists. Since $C_{CC}(\Pi_C)\in o(n)$, we have reached a contradiction to Lemma \ref{classical_lower_bound}. Therefore, no quantum strategies $\Pi_Q$ such that $Q_{CC}(\Pi_Q)\in o(\log n)$ for $\EXC_{n,m,0}$ with $m\in \tilde  o(n)$ can exist: $\bar Q_{CC}(\EXC_{n,m,0}) \in \Omega(\log n)$ in this regime. 
\end{proof}

This result directly implies the following gaps between complexities:
\begin{cor}\label{gaps}
For $\EXC_{n,m,0}$ with $m\in\omega(\sqrt{n\log n})$ and $m\in \tilde  o(n)$, we have $\bar Q_{IC}\in O(nm^{-2}\log m)$ (tends to zero as $n$ increases), $\bar Q_{CC}\in\Omega(\log n)$ and $\bar C_{CC}\in\Theta(n)$. Thus the gap between
\begin{itemize}
\item $\bar Q_{IC}$ and $\bar Q_{CC}$: doubly infinite;
\item $\bar Q_{CC}$ and $\bar C_{CC}$: at most exponential.
\end{itemize}
\end{cor}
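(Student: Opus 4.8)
The plan is to treat this Corollary as an assembly of three previously established bounds, followed by reading off the gap types from the trichotomy of asymptotic complexities introduced above. First I would fix the regime $m\in\omega(\sqrt{n\log n})$ with $m\in\tilde o(n)$, record the three complexities, and then classify the two gaps.

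For $\bar Q_{IC}$, I would note that the PJO strategy is itself a zero-error winning protocol for $\EXC_{n,m,0}$, so $\bar Q_{IC}(\EXC_{n,m,0})\le Q_{IC}({\rm PJO})\in O(nm^{-2}\log m)$, the PJO bound being distribution-independent and hence matching the definition of $\bar Q_{IC}$. The only content is to check that this is subconstant: writing $nm^{-2}\log m = (n\log n/m^2)(\log m/\log n)$ and using $m\in\omega(\sqrt{n\log n})$ (so that $n\log n/m^2\to0$) together with $m\le n$ (so that $\log m/\log n\le1$) gives $\bar Q_{IC}\in o(1)$. For $\bar Q_{CC}$, I would invoke Theorem~\ref{main} directly: $m\in\tilde o(n)$ yields $\bar Q_{CC}\in\Omega(\log n)\subseteq\omega(1)$. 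For $\bar C_{CC}$, the lower bound $\Omega(n)$ is Lemma~\ref{classical_lower_bound}, whose hypothesis $m\le\alpha n$ I would verify holds eventually because $m\in\tilde o(n)$ forces $m\in o(n)$; the matching upper bound $O(n)$ comes from the trivial protocol in which Alice sends all $n$ bits of $x$ and Bob outputs any $z\neq\mathcal M_y(x)$. Hence $\bar C_{CC}\in\Theta(n)$.

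With the three bounds in place, the gaps follow immediately from the definitions. Since $\bar Q_{IC}\in o(1)$ while $\bar Q_{CC}\in\omega(1)$, this is an $o(1)$ versus $\omega(1)$ separation, which is doubly infinite by definition. For the second gap, $\bar C_{CC}\in\Theta(n)$ and $\bar Q_{CC}\in\Omega(\log n)$: from $\bar Q_{CC}\ge c\log n$ I get $n\le 2^{O(\bar Q_{CC})}$, and combined with $\bar C_{CC}\in O(n)$ this yields $\bar C_{CC}\le 2^{O(\bar Q_{CC})}$, so the separation is at most exponential.

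The only step requiring real care — and the one I would write out explicitly — is confirming that the window $m\in\omega(\sqrt{n\log n})$ with $m\in\tilde o(n)$ is nonempty and sits inside the hypotheses of all three inputs simultaneously; the choice $m=\lceil n^{3/4}\rceil$ suffices, since it exceeds $\sqrt{n\log n}$, lies in $\tilde o(n)$, and satisfies $m\le\alpha n$ for large $n$ with any fixed $\alpha<1/2$. Beyond this consistency check and the two short asymptotic estimates above, the argument is pure bookkeeping, so I anticipate no substantive obstacle.
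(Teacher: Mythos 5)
Your proposal is correct and takes essentially the same route as the paper: Corollary~\ref{gaps} is stated there without a separate proof, as a direct assembly of the PJO information-cost bound, Theorem~\ref{main}, and Lemma~\ref{classical_lower_bound} (whose proof already notes $\bar C_{CC}\in\Theta(n)$ via the trivial protocol where Alice sends all of $x$), with the gap types read off from the definitions in Appendix~\ref{app:gap}. Your additional checks --- the estimate showing $nm^{-2}\log m\in o(1)$, the nonemptiness of the regime (e.g.\ $m=\lceil n^{3/4}\rceil$), and the $\bar C_{CC}\le 2^{O(\bar Q_{CC})}$ bookkeeping --- are precisely the routine verifications the paper leaves implicit.
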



\section{Robustness against error}
In the discussions above, Bob is required to output a right answer every single time. If error is allowed sometimes, do the properties of the zero-error instances still hold? Note that $\gamma\geq 2^{-m}$ is trivial since such probability of error can be achieved by randomly guessing without any communication. With a variant of the zero-error simulation protocol, we show the following general result for $\gamma<2^{-m}$:
\begin{thm}\label{error}
Consider some $h(m)$ such that $\gamma$ satisfies $-\log(2^{-m}-\gamma)\in O(h(m))$. Suppose that for $\EXC_{n,m,\gamma}$ with $\gamma<2^{-m}$, there exists a winning quantum strategy $\Pi^\gamma_Q$ such that $Q_{CC}(\Pi^\gamma_Q)\equiv s\in O(\xi(n))$. Then, one can construct a classical strategy $\Pi^{0^+}_C$ such that $C_{CC}(\Pi^{0^+}_C)=[O(h(m))+O(\xi(n))]2^{O(\xi(n))}$, whose probability of error can be made arbitrarily small. 
\end{thm}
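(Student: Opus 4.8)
The plan is to reuse the zero-error simulation behind Theorem~\ref{main}, lowering the resolving threshold $2^{-m}$ so that it sits strictly inside the \emph{margin} $2^{-m}-\gamma$. As there, I would encode the message of $\Pi^\gamma_Q$ as a pure state $\ket{\psi(x)}=\sum_j \alpha_j\ket{j}=\sum_j (b_j+ic_j)\ket{j}$ on $O(\xi(n))$ qubits (purifying if necessary), so that there are $2^{O(\xi(n))}$ amplitudes, and have Alice's classical message $\Pi^{0^+}_C$ register each $b_j,c_j$ by an approximation $\tilde b_j,\tilde c_j$ of accuracy $\delta$. Bob then renormalizes the encoded vector and applies Born's rule against his POVM $\{P_z\}$ to obtain approximate outcome probabilities $\tilde p_z$.

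The first quantitative step is to fix $\delta$. The error in each $\tilde p_z$ is controlled by triangle inequalities over the $2^{O(\xi(n))}$ amplitudes together with the renormalization, so it suffices to take $\log(1/\delta)=O(h(m))+O(\xi(n))$: the $O(\xi(n))$ term pays for the accumulation across amplitudes, while the $O(h(m))$ term is precisely what makes the per-outcome error a small fraction of the margin, since $-\log(2^{-m}-\gamma)\in O(h(m))$. Encoding $2^{O(\xi(n))}$ amplitudes to this accuracy costs $[O(h(m))+O(\xi(n))]\,2^{O(\xi(n))}$ bits, which is the claimed $C_{CC}(\Pi^{0^+}_C)$ and collapses to the Theorem~\ref{main} bound when $\gamma=0$ (so that $h(m)=m$).

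For the decision rule I would fix a threshold $T$ with $\gamma<T<2^{-m}$ and have Bob output any $z$ with $\tilde p_z>T$; such a $z$ must exist, since the $\tilde p_z$ sum to $1$ over $2^m$ outcomes while $2^m T<1$. Because $\Pi^\gamma_Q$ wins $\EXC_{n,m,\gamma}$, the true excluded-outcome weight $\bra{\psi(x)}P_{\mathcal M_y(x)}\ket{\psi(x)}$ is at most $\gamma$ on every input, and the chosen $\delta$ guarantees $\tilde p_{\mathcal M_y(x)}\le\gamma+\delta'<T$, where $\delta'$ is the induced probability error; hence the excluded outcome never lies in $\{z:\tilde p_z>T\}$ and Bob's output certainly differs from $\mathcal M_y(x)$. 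The only remaining slack is the rounding of the amplitudes, so by taking $\delta$ small enough that $\delta'$ clears the gap $T-\gamma$—still within the $h(m)+O(\xi(n))$ bit budget—the error of $\Pi^{0^+}_C$ is brought below any prescribed $\epsilon>0$, matching the ``$0^+$'' in its label.

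The main obstacle is that the margin $2^{-m}-\gamma$ enters the construction twice and in tension: it dictates how fine $\delta$ must be, hence the communication through $h(m)$, while also being the room between $\gamma$ and $2^{-m}$ into which the threshold $T$ must fit. The delicate step is to certify that a single choice of $\delta$ and $T$ meets both demands at once, namely that the accumulated probability error $\delta'$ stays below $T-\gamma$ and below $2^{-m}-T$ even after the renormalization and the $2^{O(\xi(n))}$-term summation in Born's rule have had a chance to inflate it. This is exactly where the hypothesis $-\log(2^{-m}-\gamma)\in O(h(m))$ is used, and it is what lets the final cost close at $[O(h(m))+O(\xi(n))]\,2^{O(\xi(n))}$.
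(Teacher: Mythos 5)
Your proposal is correct, and it reproduces Alice's side of the paper's construction exactly (purify the message into a $2^{O(\xi(n))}$-dimensional pure state, register all real and imaginary parts to accuracy $\bar\epsilon_\gamma$ with $\log(1/\bar\epsilon_\gamma)\in O(h(m))+O(\xi(n))$, giving the claimed cost), but Bob's decision step takes a genuinely different route from the paper's. In the paper, Bob uses local \emph{quantum} resources: he re-prepares the state $|\tilde\psi^\gamma(x)\rangle$ from the classical description, feeds it into his original measurement, repeats this $t$ times, and takes a majority vote, invoking the Chernoff bound to push the error below any $\tau>0$; the correctness rests on the same inequality you derive, $p'_{\mathcal{M}_y(x)}<\gamma+20\bar\epsilon_\gamma 2^{s}<2^{-m}$ [Eq.~\eqref{p1}], with the accuracy choice $\bar\epsilon_\gamma=(2^{-m}-\gamma)2^{-s}/20$. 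You instead keep Bob entirely classical, as in the zero-error proof of Theorem~\ref{main}: he evaluates $\tilde p_z$ by Born's rule against the combined POVM and outputs any $z$ with $\tilde p_z>T$ for a fixed threshold $T$ with $\gamma+\delta'<T<2^{-m}$; your existence argument ($\sum_z\tilde p_z=1$ after renormalization while $2^mT<1$) and exclusion argument ($\tilde p_{\mathcal{M}_y(x)}\leq\gamma+\delta'<T$, via the perturbation bound of Lemma~\ref{pert2}) are both sound, and the margin $2^{-m}-\gamma$ funds both demands simultaneously, exactly as you say. Your route is simpler and in fact stronger than the theorem: it yields a \emph{deterministic, zero-error} classical protocol, dispensing with Bob's quantum re-preparation and the amplification step entirely (your closing remark about driving the error below a prescribed $\epsilon$ undersells this, and the extra condition $\delta'<2^{-m}-T$ is superfluous, since existence of a super-threshold outcome follows from normalization alone). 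One shared caveat rather than a gap: your step asserting $\langle\psi(x)|P_{\mathcal{M}_y(x)}|\psi(x)\rangle\leq\gamma$ \emph{on every input} reads the winning hypothesis as a worst-case per-input error bound; the paper's own Eq.~\eqref{p1} makes the identical assumption, so your argument is faithful to the paper's premises, though under a purely distributional reading of $\gamma$ both proofs would require the same repair.
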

\begin{proof}
Here, we only sketch the main steps of the proof. Details are given in Appendix \ref{appb}.  
We revise Bob's local part of $\Pi_C$ presented in Theorem \ref{main} to devise $\Pi^{0^+}_C$ as follows. As for the zero-error case, given input $x$, Alice prepares an $[O(h(m))+O(\xi(n))]2^{O(\xi(n))}$-bit classical message that encodes the real and imaginary parts of all amplitudes of the quantum message $|\psi^\gamma(x)\rangle$ in $\Pi^\gamma_Q$ to accuracy $(2^{-m}-\gamma)2^{-s}/20$, and sends it to Bob, who then normalizes the amplitude vector. 
Instead of classically calculating the probability distribution of the output as in $\Pi_C$, Bob now resorts to local quantum resources. He simply prepares a new quantum state $|\tilde\psi^\gamma(x)\rangle$ according to the amplitudes, and then feeds it into his original local quantum computation. It can be shown that the probability of outputting the wrong answer $p_{\mathcal{M}_y({ x})}$ is always less than $2^{-m}$, which guarantees that $\mathcal{M}_y({ x})$ is not the winning output.
Therefore, Bob can run $\Pi^{0^+}_C$ multiple times and take majority vote to suppress the probability of error to an arbitrarily small value by the Chernoff bound (amplitude amplification). 
\end{proof}
When $\gamma=2^{-(m+1)}, m\geq\sqrt{n}$, it was shown in an early version of Ref.\ \cite{comm} that only one classical bit of communication is needed. For completeness we include the proof in Appendix \ref{appb}. Therefore we consider only the regime of even smaller $\gamma$ to be of interest. Since $m<-\log(2^{-m}-\gamma)< m+1$ under this constraint, $h(m)$ can be replaced by $m$ in the above discussions. Like the zero-error case, for $m\in \tilde  o(n)$, this theorem indicates that the gap between $\bar Q_{CC}$ and $\bar C_{CC}$ for $\EXC_{n,m,\gamma}$, when any non-trivial $\gamma$ is allowed, is at most exponential. Consequently, the logarithmic lower bound on $\bar Q_{CC}$ and the gaps established for the zero-error case still hold even if some $\gamma$ such that $\bar C_{CC}\in\Omega(n)$ is allowed. The permissible range of $\gamma$ is identified by the following theorem:
\begin{thm}\label{Cerror}
For $m\leq\alpha n$ where $0<\alpha<1/2$ is a constant and $ \gamma\leq (n+1)^{-m} $, $\bar C_{CC}(\EXC_{n,m,\gamma})\in\Omega(n) $.
\end{thm}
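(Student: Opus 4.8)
The goal is to extend the zero-error classical lower bound of Lemma \ref{classical_lower_bound} to the error-tolerant regime $\gamma \le (n+1)^{-m}$. The natural plan is to reduce the $\gamma$-error case back to the zero-error case by an amplification/fingerprinting argument: I would show that any winning classical protocol $\Pi$ for $\EXC_{n,m,\gamma}$ with too little communication can be converted into a winning \emph{zero-error} protocol for $\EXC_{n,m,0}$ (or a closely related task covered by Lemma \ref{classical_lower_bound}) with only a constant-factor blowup in communication, thereby inheriting the $\Omega(n)$ bound. The key quantitative lever is the threshold $(n+1)^{-m}$: since there are at most $(n+1)^n$ (equivalently, a polynomial-in-$n$-to-the-$m$ number of) distinct restricted strings $\mathcal M_y(x)$ that can arise, a per-instance error probability below $(n+1)^{-m}$ is small enough that a union bound over all relevant instances leaves a nonzero-measure set of inputs on which the protocol never errs.

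Concretely, first I would fix the optimal $\gamma$-error one-way protocol and its (possibly private) randomness, writing its success probability averaged over the uniform input distribution on $(x,y)$. Since the overall error is at most $\gamma \le (n+1)^{-m}$, by averaging there must exist a fixing of Bob's randomness, and then a large fraction of Alice's inputs $x$, for which the conditional error is still $O((n+1)^{-m})$. Second, I would exploit that for a \emph{fixed} message the number of distinct winning conditions Bob must satisfy is governed by the $\binom{n}{m}$ choices of $y$ and the $2^m$ possible output strings; the crucial counting step is that $(n+1)^{-m} \cdot (\text{number of } (y,z) \text{ constraints})$ can be driven below $1$, so that on a positive fraction of inputs the protocol is forced to be \emph{correct on every} $y$ simultaneously. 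This yields a zero-error protocol on a restricted but still $\Omega(n)$-entropy input set, to which (a suitable relativization of) Lemma \ref{classical_lower_bound} applies.

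The main obstacle I anticipate is the bookkeeping between the \emph{average} error guarantee $\gamma$ and the \emph{worst-case} zero-error requirement needed to invoke Lemma \ref{classical_lower_bound}: a small average error does not immediately forbid errors on the hard instances the lower bound relies on. Overcoming this is exactly where the polynomial base $n+1$ (rather than the trivial base $2$ giving $2^{-m}$) earns its keep — it provides enough slack in the union bound over the $\mathrm{poly}(n)^m$ distinct restriction patterns to force per-instance zero error on a combinatorially large subcube of inputs. I would therefore structure the argument so that Lemma \ref{classical_lower_bound} is applied not to the full game but to this induced subgame, checking that the subgame still has an input set of size $2^{\Omega(n)}$ and retains the structural property (uniform restrictions with $m \le \alpha n$, $\alpha < 1/2$) that drove the original linear bound. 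The remaining steps — verifying that conditioning on Bob's randomness and on the good input set changes the communication cost by at most a constant factor — are routine and deferred to the appendix.
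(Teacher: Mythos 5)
Your proposal is correct in substance but takes a genuinely different route from the paper's. The paper does not reduce to the zero-error case at all: it invokes the one-way rectangle (corruption) bound of Jain, Klauck, and Nayak \cite{jain2008direct} (Theorem \ref{Rectangle Bound}), which reduces the theorem to showing that under the uniform product distribution no rectangle $S\times\mathcal{Y}$ with $|S|\geq 2^{n-c}$ is one-way $\gamma$-monochromatic (Lemma \ref{EXC error}); it then characterizes the extremal monochromatic set exactly --- the zero-error core of size $\sum_{i=0}^{m-1}\binom{n}{i}$ (from the proof of Theorem 2 of \cite{comm}) plus the Hamming layer of $\binom{n}{m}$ strings, each contributing exactly one erring pair $(x,y)$ --- so that any $S$ larger than $\sum_{i=0}^{m}\binom{n}{i}$ forces error at least $\epsilon=1/\sum_{i=0}^{m}\binom{n}{i}\geq(n+1)^{-m}$, and concludes via the binomial estimate of Lemma \ref{lem: bin sum}. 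Your route instead fixes all randomness by averaging and applies Markov over the $\binom{n}{m}$ choices of $y$: a nonzero conditional error for a given $x$ is at least $1/\binom{n}{m}$, and since $\gamma\binom{n}{m}\leq n^m/\bigl(m!\,(n+1)^m\bigr)<1/m!$, all but a $1/m!$ fraction of inputs are error-free for \emph{every} $y$ simultaneously, yielding a zero-error deterministic protocol on a set $G$ with $|G|\geq 2^{n-1}$. This is sound and more elementary than going through \cite{jain2008direct}. The one repair your sketch needs is the final step: Lemma \ref{classical_lower_bound} as stated (and as proved, via PJO's information-complexity bound for the uniform distribution over all of $\{0,1\}^n$) cannot be invoked verbatim on the induced subgame; what does relativize is the underlying combinatorial fact that any input set admitting a single answer function $z(\cdot)$ correct for all $y$ has size at most $\sum_{i=0}^{m-1}\binom{n}{i}$. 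Applied per message class, this gives $2^c\geq|G|/\sum_{i=0}^{m-1}\binom{n}{i}$, hence $c\geq n-1-\log_2\sum_{i=0}^{m-1}\binom{n}{i}\in\Omega(n)$ for $m\leq\alpha n$, $\alpha<1/2$ --- exactly the counting your ``suitable relativization'' gestures at. In exchange for its extra machinery, the paper's argument handles public-coin worst-case error directly and pins down the precise threshold $1/\sum_{i=0}^{m}\binom{n}{i}$ at which the extremal rectangle saturates, whereas your Markov reduction is self-contained but certifies the $\Omega(n)$ bound only when $\gamma\binom{n}{m}$ is bounded away from $1$, which the hypothesis $\gamma\leq(n+1)^{-m}$ comfortably supplies.
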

The proof of Theorem \ref{Cerror} is given in Appendix \ref{app:Cerror}. Combining Theorems \ref{error} and \ref{Cerror}, we obtain the following results:
\begin{cor}\label{gaperror}
For $m\in \tilde  o(n)$ and $ \gamma\leq (n+1)^{-m} $, $\bar Q_{CC}(\EXC_{n,m,\gamma})\in\Omega(\log n) $. By further restricting $m\in\omega(\sqrt{n\log n})$, the gaps established in Corollary \ref{gaps} still hold.
\end{cor}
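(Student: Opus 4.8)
The plan is to prove Corollary~\ref{gaperror} by feeding Theorems~\ref{error} and \ref{Cerror} into exactly the contradiction template used for Theorem~\ref{main}, with Theorem~\ref{Cerror} now playing the role that Lemma~\ref{classical_lower_bound} played in the zero-error case. First I would suppose, toward a contradiction, that for some $m\in\tilde o(n)$ and $\gamma\leq(n+1)^{-m}$ there exists a winning quantum strategy $\Pi^\gamma_Q$ with $s\equiv Q_{CC}(\Pi^\gamma_Q)\in o(\log n)$. Before invoking Theorem~\ref{error} I would verify that this $\gamma$ lies in its ``small-$\gamma$'' regime: since $(n+1)^{-m}\leq 4^{-m}<2^{-(m+1)}$ for $n\geq 3$, one has $2^{-m}/2\leq 2^{-m}-\gamma<2^{-m}$, hence $m<-\log(2^{-m}-\gamma)<m+1$, so we may take $h(m)=m$.

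Next I would apply Theorem~\ref{error} with $\xi(n)=s\in o(\log n)$ and $h(m)=m$ to manufacture a classical strategy $\Pi^{0^+}_C$ of communication cost $[O(m)+O(s)]\,2^{O(s)}$ whose probability of error can be driven arbitrarily low. Because $s\in o(\log n)$ gives $2^{O(s)}=n^{o(1)}$, the same estimate used in Theorem~\ref{main} shows $C_{CC}(\Pi^{0^+}_C)\in o(n)$ once $m\in\tilde o(n)$. The crucial structural point is that the error suppression in Theorem~\ref{error} is performed entirely by Bob locally—he re-prepares $|\tilde\psi^\gamma(x)\rangle$ from the classical description Alice has already transmitted and re-runs his local quantum measurement—so pushing the error below $(n+1)^{-m}$ incurs no additional communication. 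Thus $\Pi^{0^+}_C$ becomes a genuine winning strategy for $\EXC_{n,m,(n+1)^{-m}}$ with only $o(n)$ communication. Since $m\in\tilde o(n)$ certainly satisfies $m\leq\alpha n$, Theorem~\ref{Cerror} forces $\bar C_{CC}(\EXC_{n,m,(n+1)^{-m}})\in\Omega(n)$, contradicting the $o(n)$ bound. Hence no such $\Pi^\gamma_Q$ exists and $\bar Q_{CC}(\EXC_{n,m,\gamma})\in\Omega(\log n)$.

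For the gap claims I would restrict further to $m\in\omega(\sqrt{n\log n})$ and collect three facts. The bound just established gives $\bar Q_{CC}\in\Omega(\log n)=\omega(1)$. The PJO strategy is zero-error and therefore also wins $\EXC_{n,m,\gamma}$ for every $\gamma\geq 0$, so $\bar Q_{IC}(\EXC_{n,m,\gamma})\leq Q_{IC}(\mathrm{PJO})\in O(nm^{-2}\log m)=o(1)$ in this regime. Finally, the trivial ``send all $n$ bits'' protocol wins with zero error, giving $\bar C_{CC}\in O(n)$, which together with Theorem~\ref{Cerror} yields $\bar C_{CC}\in\Theta(n)$. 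These are precisely the three quantities entering Corollary~\ref{gaps}, so the $o(1)$ versus $\omega(1)$ (doubly infinite) gap between $\bar Q_{IC}$ and $\bar Q_{CC}$, and the at-most-exponential gap between $\bar Q_{CC}$ and $\bar C_{CC}$, carry over verbatim.

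I expect the main obstacle to be the bookkeeping around the error amplification: one must confirm both that Bob can reach a target error as small as $(n+1)^{-m}$ using only local repetition of the single transmitted message—so that Theorem~\ref{Cerror} may be invoked at its own error threshold—and that the single-shot message length $[O(m)+O(s)]\,2^{O(s)}$ genuinely remains $o(n)$ under the stated regime $m\in\tilde o(n)$, $s\in o(\log n)$. Everything else is a direct substitution of Theorems~\ref{error} and \ref{Cerror} into the contradiction argument of Theorem~\ref{main}.
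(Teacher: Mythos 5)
Your proposal is correct and follows essentially the same route as the paper, which derives Corollary~\ref{gaperror} precisely by combining Theorems~\ref{error} and \ref{Cerror}: Theorem~\ref{Cerror} replaces Lemma~\ref{classical_lower_bound} in the contradiction template of Theorem~\ref{main}, with $h(m)=m$ justified exactly as you do (since $\gamma\leq(n+1)^{-m}<2^{-(m+1)}$ gives $m<-\log(2^{-m}-\gamma)<m+1$), and the error suppression is purely local so no extra communication is incurred. Your treatment of the gap claims (PJO bounding $\bar Q_{IC}$ for any $\gamma\geq 0$, and the trivial protocol plus Theorem~\ref{Cerror} giving $\bar C_{CC}\in\Theta(n)$) likewise matches the paper's reasoning.
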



\section{Compressing quantum communication}
Although the PJO strategy succeeds with vanishingly small amount of information cost, it requires exactly $n$ qubits of communication, which is maximal. 
For $m\in \tilde  o(n)$, the possibility of superexponential compression of quantum communication cost has been ruled out, but it remains unsettled if any compression is possible at all. In particular, one may wonder if quantum strategies can be more efficient than classical ones in communication cost. 
Here, we show that a polynomial reduction of quantum communication cost can be achieved by abandoning an insignificant part of the PJO message, while only causing a tiny probability of error such that $\bar C_{CC}\in\Omega(n)$ still holds:
\begin{thm}\label{compression}
For $ m\in \Theta(n^\alpha) $, $ 1/2<\alpha<1 $ and $ \gamma\geq(n+1)^{-m}$, $\bar Q_{CC}(\EXC_{n,m,\gamma})\in O(m^{1+\delta}) $, $ \delta>0 $.
\end{thm}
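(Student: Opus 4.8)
The plan is to compress the PJO message by projecting it onto the low-Hamming-weight subspace, exploiting the fact that in the regime $m\in\Theta(n^\alpha)$ with $\alpha>1/2$ the state $\ket{\Phi(x;\theta_m)}$ is overwhelmingly concentrated near $\ket{0}^{\otimes n}$. Writing $p:=\sin^2(\theta_m/2)$ for the single-qubit excitation probability, one has $p=\Theta(m^{-2})=\Theta(n^{-2\alpha})$, so the magnitude-squared amplitude that $\ket{\Phi(x;\theta_m)}$ places on the computational-basis strings of Hamming weight $j$ is exactly the binomial weight $\binom{n}{j}p^j(1-p)^{n-j}$, independent of $x$. The mean weight $\mu:=np=\Theta(n^{1-2\alpha})$ tends to zero since $\alpha>1/2$, so essentially all of the norm of $\ket{\Phi(x;\theta_m)}$ lives on strings of very low weight, and a cutoff on the weight should cost almost nothing.

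Concretely, I would have Alice form the (unnormalized) projection $\ket{\Phi_{\le k}(x)}$ of $\ket{\Phi(x;\theta_m)}$ onto the span of computational-basis states of weight at most $k$, renormalize it to $\ket{\tilde\Phi(x)}$, and compress it through a fixed input-independent isometry $V$ from this $D$-dimensional subspace, with $D:=\sum_{j=0}^{k}\binom{n}{j}$, into $\lceil\log_2 D\rceil$ qubits, which she transmits. Bob applies $V^\dagger$ to recover $\ket{\tilde\Phi(x)}$ on the $n$ registers and then runs the original PJO conclusive-exclusion measurement $\{\ketbra{\zeta(z)}{\zeta(z)}\}_z$ on the $m$ registers indexed by $y$. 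Since $V$ and $V^\dagger$ are local operations, the communication cost of this strategy is exactly $\lceil\log_2 D\rceil$.

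For correctness, let $\epsilon:=\Pr[\mathrm{Bin}(n,p)>k]$ be the norm lost to truncation, so that $|\bracket{\tilde\Phi(x)}{\Phi(x;\theta_m)}|^2=1-\epsilon$ and the two pure states are at trace distance $\sqrt{\epsilon}$. By monotonicity of trace distance under the partial trace over the registers outside $y$, the reduced state $\tilde\rho_y$ seen by Bob differs from the exact product reduced state $\rho_y$ by at most $\sqrt{\epsilon}$ in trace distance. Since $\bra{\zeta(\mathcal M_y(x))}\rho_y\ket{\zeta(\mathcal M_y(x))}=0$ exactly, the probability that Bob outputs the forbidden string $\mathcal M_y(x)$ is at most $\sqrt{\epsilon}$, and this is the entire error of the protocol. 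It therefore suffices to choose $k$ so that $\sqrt{\epsilon}\le(n+1)^{-m}\le\gamma$, where we use $\gamma\ge(n+1)^{-m}$ to reduce to the worst case. Using the elementary tail bound $\epsilon\le\sum_{j>k}\mu^j/j!\le 2\,\mu^{k+1}/(k+1)!$, valid once $k\gg\mu$ (which holds since $\mu\to0$), a short calculation shows that taking $k=\Theta(m)=\Theta(n^\alpha)$, with a constant depending only on $\alpha$, drives $\epsilon$ below $(n+1)^{-2m}$, as required.

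It remains to bound the communication. With $k=\Theta(n^\alpha)$ and $\alpha<1$ we have $n/k=\Theta(n^{1-\alpha})\to\infty$, so $\log_2 D\le\log_2\!\big((k+1)\tbinom{n}{k}\big)=\Theta\!\big(k\log(n/k)\big)=\Theta(n^\alpha\log n)=\Theta(m\log n)$. Since $m=n^\alpha$ gives $\log n=o(m^\delta)$ for every $\delta>0$, this is $o(m^{1+\delta})$, yielding $\bar Q_{CC}(\EXC_{n,m,\gamma})\in O(m^{1+\delta})$ as claimed. I expect the error analysis of the third paragraph to be the main obstacle: one must confirm that the norm lost to truncation on the full $n$-qubit state translates, after the partial trace onto the $y$-registers, into a forbidden-outcome probability below the stringent threshold $(n+1)^{-m}$, and that the value of $k$ this forces is still only $\Theta(m)$ rather than larger. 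The fact that $\mu\to0$ polynomially in $n$ is precisely what makes a weight cutoff of order $m$ sufficient; the hypotheses $\alpha>1/2$ (so $\mu\to0$) and $\alpha<1$ (so $\log_2 D$ is genuinely sublinear, giving actual compression) are both used here.
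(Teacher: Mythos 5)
Your proposal is correct and follows essentially the same route as the paper's Appendix D: truncate the PJO state to the Hamming-weight-$\le k$ subspace, bound the forbidden-outcome probability by the pure-state trace distance $\sqrt{1-\left|\bracket{\Phi(x)}{\Phi^{(k)}(x)}\right|^2}$ using monotonicity under partial trace and the projector characterization of trace distance (the paper's Lemma \ref{lemma:trace distance}), pick $k$ so the truncation error drops below $(n+1)^{-2m}$, and bound the communication by $\log\left(\sum_{i=0}^{k}{n\choose i}\right)\in O(m^{1+\delta})$. The only difference is quantitative and harmless: your Poisson-style tail bound $\sum_{j>k}\mu^j/j!$ permits the slightly sharper choice $k=\Theta(m)$ (communication $\Theta(m\log n)$), whereas the paper takes $k=m^{1+\eta}$ for $\eta>0$; both give the stated $O(m^{1+\delta})$ bound.
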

\begin{proof}
Here, we only sketch the main steps of the proof. Details are given in Appendix \ref{appd}. 
Instead of directly sending the $n$-qubit state given by Eq. \eqref{PBRstate}, Alice now compresses the message by projecting it onto the subspace spanned by the computational basis vectors with Hamming weight (the number of ones) at most $k$.
Upon receiving the message, Bob performs the same measurement on the quantum state as in the original PJO strategy. Obviously, this would lead to some probability of error $\epsilon_k$. However, it can be shown that taking $ k=m^{1+\eta} $ with any $ \eta>0 $ is sufficient to guarantee that $\epsilon_k\leq(n+1)^{-m}$.
It then follows that the size of the compressed message scales as $ O(m^{1+\delta}) $ for any $ \delta>0 $.
\end{proof}
Combining Theorems \ref{Cerror} and \ref{compression}, we obtain another quantum-classical separation:
\begin{cor}
For $\EXC_{n,m,\gamma}$ with $\gamma\geq\left(n+1\right)^{-m}$, $m\in\Theta\left(n^{\alpha}\right)$, $1/2<\alpha<1$, there is a polynomial gap between $\bar Q_{CC}$ and $\bar C_{CC}$.
\end{cor}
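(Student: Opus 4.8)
The plan is to combine the quantum upper bound of Theorem~\ref{compression} with the classical lower bound of Theorem~\ref{Cerror}, gluing the two together at the single value of $\gamma$ where their hypotheses hold simultaneously. First I would note that the constraint $\gamma\geq(n+1)^{-m}$ demanded by Theorem~\ref{compression} and the constraint $\gamma\leq(n+1)^{-m}$ demanded by Theorem~\ref{Cerror} overlap precisely at the threshold $\gamma=(n+1)^{-m}$, so that is where the separation is to be exhibited. I would also check that the ranges of $m$ are compatible: Theorem~\ref{compression} assumes $m\in\Theta(n^{\alpha})$ with $1/2<\alpha<1$, and since $\alpha<1$ forces $m/n\to 0$, for all large $n$ we have $m\leq\beta n$ for any fixed $0<\beta<1/2$, which is exactly the hypothesis of Theorem~\ref{Cerror} (here $\beta$ is the constant named $\alpha$ in the statement of Theorem~\ref{Cerror}, not to be confused with the exponent $\alpha$ of the corollary).

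Next I would substitute the two bounds at $\gamma=(n+1)^{-m}$. Theorem~\ref{compression} gives $\bar Q_{CC}(\EXC_{n,m,\gamma})\in O(m^{1+\delta})$ for every $\delta>0$, and Theorem~\ref{Cerror} gives $\bar C_{CC}(\EXC_{n,m,\gamma})\in\Omega(n)$. Plugging $m\in\Theta(n^{\alpha})$ into the quantum bound yields $\bar Q_{CC}\in O(n^{\alpha(1+\delta)})$. The free parameter $\delta$ does the work: since $\alpha<1$, I would fix $\delta$ small enough that $\alpha(1+\delta)<1$, i.e.\ any $0<\delta<\alpha^{-1}-1$. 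For such $\delta$ the quantum complexity has exponent strictly below $1$ while the classical complexity is linear. Note that only $\alpha<1$ is needed for this step; the lower bound $\alpha>1/2$ is inherited from Theorem~\ref{compression}.

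Finally I would read off the separation. With $\alpha(1+\delta)<1$ the ratio satisfies $\bar C_{CC}/\bar Q_{CC}\in\Omega\!\left(n^{\,1-\alpha(1+\delta)}\right)$, a strictly positive power of $n$; equivalently $\bar C_{CC}\in\Omega\!\bigl(\bar Q_{CC}^{\,1/(\alpha(1+\delta))}\bigr)$ with exponent $1/(\alpha(1+\delta))>1$, which is by definition a polynomial gap. I do not expect a genuinely hard step, since the corollary is essentially bookkeeping on top of the two preceding theorems. The only real subtlety---and the point I would flag explicitly---is the interface in $\gamma$: the quantum upper bound holds for $\gamma\geq(n+1)^{-m}$ whereas the classical lower bound is guaranteed only for $\gamma\leq(n+1)^{-m}$, so the polynomial separation is witnessed exactly at the boundary $\gamma=(n+1)^{-m}$ and need not persist for strictly larger $\gamma$. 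A minor secondary point is simply tying the choice of $\delta$ to $\alpha$ so that the exponent stays below $1$.
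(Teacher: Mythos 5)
Your proposal is correct and coincides with the paper's own derivation, which consists of exactly this one-line combination of Theorems \ref{Cerror} and \ref{compression}: at $\gamma=(n+1)^{-m}$ one has $\bar Q_{CC}\in O(n^{\alpha(1+\delta)})$ versus $\bar C_{CC}\in\Omega(n)$, and your explicit bookkeeping (checking that $m\in\Theta(n^{\alpha})$ with $\alpha<1$ satisfies the hypothesis of Theorem \ref{Cerror}, and fixing $0<\delta<\alpha^{-1}-1$ so that $\alpha(1+\delta)<1$) is precisely what the paper leaves implicit. Your flag that the separation is certified only at the boundary $\gamma=(n+1)^{-m}$ is also a fair and accurate reading of the scope of the statement, since the classical lower bound need not persist for strictly larger $\gamma$.
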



\section{Concluding remarks}
In this paper, we obtained some new knowledge about quantum communication by studying different regimes of the exclusion game.
The key result of this paper is a logarithmic lower bound on the quantum communication complexity of most exclusion games. This bound indicates the following results: (i) a doubly infinite gap between the quantum information and communication complexities; (ii) the gap between the quantum and classical communication complexities of certain relational tasks with exponentially small possible probability of error is at most exponential. In contrast, the largest known gap between classical information and communication complexities is singly infinite \cite{Braverman}, and the known upper bound of quantum-classical gap in communication complexities only applies to bounded-error function problems \cite{kremer}.
For exclusion games, we leave open the problems of whether the $\Omega(\log n)$ lower bound on the quantum communication complexity for $m\in \tilde o(n)$ is tight, and whether the gap between the quantum and classical communication complexities for $m\in \tilde \Omega(n)$ can be superexponential. (Interestingly, for a slight modification of the exclusion game, there exists a singly infinite gap between the entanglement-assisted communication complexity and the ordinary classical communication complexity \cite{comm}.) Another set of important open problems is related to how large the gap between quantum information and communication complexities can be in different settings, e.g., bounded-error, entanglement-assisted, interactive. Answers to these problems will provide significant insight into the relation between these complexities, and the power of quantum resources in the communication model.
 




\begin{acknowledgements}
The authors would like to thank R.\ Jain and J.\ Oppenheim for insightful discussions, and the anonymous referee for helpful comments. Z.W.L.\ is supported by an Army Research Office award (No.\ W911NF-11-1-0400). C.P. is supported by the European Research Council (ERC Grant Agreement No.\ 337603). Y.Z.\ is supported by the ARO grant Contract No.\ W911NF-12-0486. D.E.K.\ is supported by the National Science Scholarship from the Agency for Science, Technology and Research (A*STAR). S.A. is supported by an Alan T. Waterman Award from the National Science Foundation, under Grant No.\ 1249349.
\end{acknowledgements}

%

\widetext
\appendix

\makeatletter

\section{Formal definitions of infinite gaps}\label{app:gap}
Here, we formally define and classify infinite gaps between two positive asymptotics, $g_1(n)$ and $g_2(n)$ (without loss of generality, assume that $\lim\limits_{n\to\infty}g_2(n)/g_1(n)\geq 1$), as $n$ tends to infinity. 
The key idea of properly characterizing all possible gaps is to symmetrize the increasing and decreasing asymptotics by using the logarithmic scale. As discussed in the main text, finite gaps are characterized by a type of well-behaved increasing monotone function.
However, there exist gaps that are larger than any finite one, in the sense that they grow faster than any monotone function asymptotically:
\begin{defn}[Infinite gap]
The gap between $g_1(n)$ and $g_2(n)$ is infinite, if there does not exist any strictly increasing monotone function $g$ such that $\lim\limits_{n\to\infty}\log g_2(n)/g(\log g_1(n))=1$.
\end{defn}
\noindent 
Infinite gaps can be further classified:
\begin{defn}[Doubly infinite gap]
The gap between $g_1(n)$ and $g_2(n)$ is doubly infinite, if there exists an intermediate asymptotic $g_m(n)$ satisfying $\lim\limits_{n\to\infty}g_2(n)/g_m(n)\geq 1,\lim\limits_{n\to\infty}g_m(n)/g_1(n)\geq 1$ such that $g_1(n)$ vs. $g_m(n)$ and $g_m(n)$ vs. $g_2(n)$ are both infinite gaps.
\end{defn}
\begin{defn}[Singly infinite gap]
The gap between $g_1(n)$ and $g_2(n)$ is singly infinite, if $g_1(n)$ vs. $g_2(n)$ is an infinite gap, but there does not exist any asymptotic $g_m(n)$ satisfying $\lim\limits_{n\to\infty}g_2(n)/g_m(n)\geq 1,\lim\limits_{n\to\infty}g_m(n)/g_1(n)\geq 1$ such that $g_1(n)$ vs. $g_m(n)$ and $g_m(n)$ vs. $g_2(n)$ are both infinite gaps.
\end{defn}
\noindent It is evident that, if the gap between $g_1(n)$ and $g_2(n)$ is doubly infinite, the only possibility is that $g_m(n)\in\Theta(1),g_1(n)\in o(1),g_2(n)\in\omega(n)$. This is the largest type of gap between two positive asymptotics. 
Gaps that take the form $o(n)$ vs. $\Theta(1)$ or $\Theta(1)$ vs. $\omega(n)$ are singly infinite. Infinite gaps are either singly infinite or doubly infinite.

\section{Lemmas for Theorems \ref{main} and \ref{error}}\label{app:lemmas}
Here, we present the detailed proofs of some lemmas that are useful for proving Theorems \ref{main} and \ref{error}, including Lemma \ref{classical_lower_bound}, which has already been stated in the main text. 

\theoremstyle{plain}
\newtheorem*{lem:classical_lower_bound}{Lemma \ref{classical_lower_bound}}
\begin{lem:classical_lower_bound}
For $m\leq\alpha n$ where $0<\alpha<1/2$ is a constant, $\bar C_{CC}(\EXC_{n,m,0}) \in \Omega(n)$.
\end{lem:classical_lower_bound}
\begin{proof}
By Theorem 2 of Ref.\ \cite{comm}, for any classical strategy $\Pi$ that wins $\EXC_{n,m,0}$, $C_{IC}(\Pi)\geq n-\log_2\left(\sum_{i=0}^{m-1}{n \choose i}\right)$. For $m\leq\alpha n$ where $0<\alpha<1/2$ is a constant, $C_{IC}(\Pi)\in\Omega(n)$ (see Appendix C of Ref.\ \cite{comm}).
Since the amount of information revealed cannot exceed the amount of communication, i.e., $C_{IC}\leq C_{CC}$ for any communication protocol \cite{cywy,ccic}, it follows that $C_{CC}(\Pi)\in \Omega(n)$. Note that Alice can always send the whole string to Bob in order to win, thus in fact $C_{CC}(\Pi)\in \Theta(n)$. Therefore, $\bar C_{CC}(\EXC_{n,m,0}) \in \Omega(n)$ for the specified regime of $m$ asymptotically.
\end{proof}

\begin{lem}\label{cbits}
A $t$-qubit pure quantum state can be classically described by a set of real numbers encoding the real and imaginary parts of all amplitudes to accuracy $\epsilon$ using $O\left(2^t\log({1}/{\epsilon})\right)$ bits.
\end{lem}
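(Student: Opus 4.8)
The plan is to count directly: the number of real parameters that specify the state, times the number of bits needed to store each to accuracy $\epsilon$. First I would write a general $t$-qubit pure state in the computational basis as $\ket{\psi} = \sum_{j=1}^{2^t} \alpha_j \ket{j}$ with complex amplitudes $\alpha_j = b_j + i c_j$, so that $\ket{\psi}$ is completely determined by the $2 \cdot 2^t$ real numbers $b_j$ and $c_j$.

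Next I would bound the range of each parameter. Normalization gives $\sum_{j=1}^{2^t} |\alpha_j|^2 = 1$, so $|\alpha_j| \le 1$ and hence every $b_j$ and $c_j$ lies in the fixed interval $[-1,1]$. For a single real number in $[-1,1]$, a fixed-point binary representation suffices: partitioning $[-1,1]$ into subintervals of width at most $\epsilon$ and recording the index of the subinterval containing the true value costs $\lceil \log_2(2/\epsilon) \rceil = O(\log(1/\epsilon))$ bits, and taking (say) the left endpoint of the recorded subinterval reproduces the value to within $\epsilon$.

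Finally I would multiply: $2 \cdot 2^t$ real numbers, each stored in $O(\log(1/\epsilon))$ bits, gives a total of $O(2^t \log(1/\epsilon))$ bits, as claimed. There is no genuine obstacle in this statement; the only point that needs care is the boundedness step, namely using normalization to confine each real and imaginary part to a fixed range $[-1,1]$, so that the per-parameter cost depends only on $\epsilon$ rather than on the magnitude of the amplitude. Recording the full complex amplitude to accuracy $\epsilon$ likewise costs $O(\log(1/\epsilon))$ bits, so the overall count is unaffected.
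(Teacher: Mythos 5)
Your proposal is correct and follows essentially the same route as the paper's proof: decompose the state into the $2\cdot 2^t$ real and imaginary parts of its amplitudes, use normalization to confine each to $[-1,1]$, and store each with a fixed-point encoding of $O(\log(1/\epsilon))$ bits (the paper's version uses $r$ bits after the binary point plus a sign bit, which is the same idea as your interval-index encoding). No gaps; nothing further is needed.
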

\begin{proof}
Generically, a $t$-qubit pure state $|\psi_t\rangle$ can be written as $|\psi_t\rangle=\sum_{i=1}^{2^t}\alpha_i|i\rangle$, where $\alpha_i\in\mathbb{C}$, and $\{|i\rangle\}$ is a complete orthonormal basis set containing $2^t$ elements. We express all complex amplitudes as $\alpha_i=b_i+ic_i$ where $b_i, c_i\in\mathbb{R}$, satisfying $\sum_{i=1}^{2^t}|\alpha_i|^2=\sum_{i=1}^{2^t}(b_i^2+c_i^2)=1$. Thus, $0\leq|b_i|,|c_i|\leq 1$. To approximate each of these real numbers to accuracy $\epsilon=2^{-r}$, we keep the first $r$ bits after the binary point, and use one extra bit to indicate its sign, i.e., we can find an $(r+1)$-bit classical string that encodes an approximation $\tilde b_i$ of each $b_i$ such that for all $i$,
\begin{equation}
\begin{split}
\Delta b_i=|\tilde b_i-b_i|\leq\epsilon,\\
\Delta c_i=|\tilde c_i-c_i|\leq\epsilon.
\end{split}
\end{equation}
Notice that there are $2\times 2^t$ such numbers in total, thus only $2^{t+1}(r+1)=O\left(2^t\log({1}/{\epsilon})\right)$ bits are needed to encode $|\psi_t\rangle$ such that we have specified the real and imaginary parts of all amplitudes to accuracy $\epsilon$. 
\end{proof}

\begin{lem}\label{pert}
Let $ \mathcal{H} $ be a Hilbert space of dimension $ |\mathcal{H}|=l $, with orthonormal basis $\{ |1 \rangle,\cdots,| l \rangle\}$. Let $ |\psi \rangle \in \mathcal{H} $ with $ |\psi \rangle=\sum_{j=1}^{l}\alpha_j|j \rangle=\sum_{j=1}^{l}(b_j+ic_j)|j \rangle $. Suppose that we have $ \{\tilde{b}_j,\tilde{c}_j\} $ such that $\forall j$, $ |b_j-\tilde{b}_j|,|c_j-\tilde{c}_j|\leq\epsilon <(6\sqrt{2l})^{-1}$. Let $ |\tilde{\psi}\rangle = \sum_{j=1}^{l}\tilde{\alpha}_j|j \rangle=\sum_{j=1}^{l}(\tilde{b}_j+i\tilde{c}_j)|j \rangle/\nu$ where $\nu\equiv\sqrt{{\sum_{k=1}^l(\tilde{b}_k^2+\tilde{c}_k^2)}} $ is the norm. Then $ D(|\psi\rangle,|\tilde{\psi}\rangle)< 10\sqrt{l}\epsilon $, where $ D(\,,\,) $ is the trace distance.
\end{lem}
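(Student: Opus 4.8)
The plan is to reduce the trace-distance bound to a Euclidean-norm estimate on the underlying amplitude vectors, treating the normalization factor $\nu$ as a separate perturbation. First I would introduce the \emph{unnormalized} vector $\ket{\phi}=\sum_{j=1}^l(\tilde b_j+i\tilde c_j)\ket{j}$, so that $\ket{\tilde\psi}=\ket{\phi}/\nu$. Componentwise the amplitude error is $|\alpha_j-\tilde\alpha_j|^2=(b_j-\tilde b_j)^2+(c_j-\tilde c_j)^2\le 2\epsilon^2$, and summing the $l$ terms gives $\|\ket\psi-\ket\phi\|\le\sqrt{2l}\,\epsilon=:\eta$. This is the routine part.

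Next I would control the normalization. Since $\ket\psi$ is a unit vector, the reverse triangle inequality yields $\bigl|\,\|\ket\phi\|-\|\ket\psi\|\,\bigr|=|\nu-1|\le\|\ket\phi-\ket\psi\|\le\eta$. Here the hypothesis $\epsilon<(6\sqrt{2l})^{-1}$ enters: it forces $\eta<1/6$, so $\nu>5/6$ stays bounded away from zero and $\ket{\tilde\psi}$ is well defined. A further triangle inequality then propagates the unnormalized error to the normalized state, $\|\ket\psi-\ket{\tilde\psi}\|\le\|\ket\psi-\ket\phi\|+\|\ket\phi-\ket\phi/\nu\|=\eta+|\nu-1|\le 2\eta=2\sqrt{2l}\,\epsilon$, using $\|\ket\phi-\ket\phi/\nu\|=|1-1/\nu|\,\nu=|\nu-1|$.

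Finally I would convert the Euclidean bound into a trace-distance bound via the pure-state identity $D(\ket\psi,\ket{\tilde\psi})=\sqrt{1-|\bracket{\psi}{\tilde\psi}|^2}$. Writing $\delta=\|\ket\psi-\ket{\tilde\psi}\|\le 2\eta$ and using $\delta^2=2-2\,\mathrm{Re}\,\bracket{\psi}{\tilde\psi}$ for unit vectors gives $\mathrm{Re}\,\bracket{\psi}{\tilde\psi}=1-\delta^2/2$, which the smallness of $\eta$ keeps positive, so $|\bracket{\psi}{\tilde\psi}|\ge 1-\delta^2/2$. Substituting, $D\le\sqrt{1-(1-\delta^2/2)^2}=\sqrt{\delta^2(1-\delta^2/4)}\le\delta\le 2\sqrt{2l}\,\epsilon<10\sqrt{l}\,\epsilon$, since $2\sqrt2<10$; the stated constant is in fact not tight.

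The hard part will be the normalization step rather than the counting of amplitudes: one must verify that $\nu$ is bounded away from zero (so that renormalizing the approximate vector makes sense at all) and then carry the unnormalized error through the division by $\nu$ without degrading the constant. This is precisely where the quantitative assumption on $\epsilon$ is used, and it is the only place where care beyond routine triangle-inequality bookkeeping is required.
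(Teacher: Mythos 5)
Your proof is correct, and it takes a genuinely different route from the paper's. The paper works componentwise: it sandwiches $1/\nu$ between $1\mp3\sqrt{2l}\,\epsilon$ (via Cauchy--Schwarz on $\sum_j(|b_j|+|c_j|)\le\sqrt{2l}$, which is where the hypothesis $\epsilon<(6\sqrt{2l})^{-1}$ gets consumed), then runs a sign case analysis on each $b_j$ to obtain $|b_j-\tilde b_j/\nu|<(2+3\sqrt{2l}\,|b_j|)\epsilon$, and finally sums to $\sum_j|\alpha_j-\tilde\alpha_j|^2<100\,l\,\epsilon^2$, using the same identity $2-2\,\mathrm{Re}\bracket{\psi}{\tilde\psi}=\sum_j|\alpha_j-\tilde\alpha_j|^2$ that appears at the end of your argument. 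You instead factor the error through the unnormalized vector $\ket{\phi}$, dispose of the normalization in one stroke via the reverse triangle inequality $|\nu-1|\le\eta$ together with the exact identity $\bigl\|\ket{\phi}-\ket{\phi}/\nu\bigr\|=|\nu-1|$, and then convert the Euclidean bound to trace distance. This buys you three things: no case analysis at all, the hypothesis on $\epsilon$ used only to keep $\nu$ bounded away from zero (so $\ket{\tilde\psi}$ is well defined), and the sharper constant $2\sqrt{2l}\,\epsilon$ in place of $10\sqrt{l}\,\epsilon$ --- the paper's factor $10$ is an artifact of its multiplicative treatment of $1/\nu$. One minor remark: your positivity check in the last step is harmless but unnecessary, since $D(\ket{\psi},\ket{\tilde\psi})\le\|\ket{\psi}-\ket{\tilde\psi}\|$ holds for arbitrary unit vectors, because $1-|z|^2=(1-|z|)(1+|z|)\le2(1-|z|)\le2\,(1-\mathrm{Re}\,z)$ for $|z|\le1$.
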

\begin{proof}
We first consider the normalization factor:
\begin{equation}
\nu^2\equiv\sum^l_{j=1}\tilde{b}_j^2+\tilde{c}_j^2\leq\sum^l_{j=1}(|b_j|+\epsilon)^2+(|c_j|+\epsilon)^2=1+2\sum^l_{j=1}(|b_j|+|c_j|)\epsilon+2l\epsilon^2.
\end{equation}
By the Cauchy-Schwarz inequality, we have 
\begin{equation}
\sum^l_{j=1}|b_j|+|c_j|\leq\sqrt{2l},\label{Cauchy}
\end{equation}
and 
\begin{equation}
2l\epsilon^2<2l\frac{1}{\sqrt{2l}}\epsilon=\sqrt{2l}\epsilon.
\end{equation}
Therefore,
\begin{equation}
\nu^2<1+3\sqrt{2l}\epsilon.
\end{equation}
Similarly,
\begin{equation}
1-2\sqrt{2l}\epsilon<\nu^2.
\end{equation}
Since 
\begin{equation}
\frac{1}{\sqrt{1+3\sqrt{2l}\epsilon}}>\sqrt{1-3\sqrt{2l}\epsilon}>1-3\sqrt{2l}\epsilon,
\end{equation}
and
\begin{equation}
\frac{1}{\sqrt{1-2\sqrt{2l}\epsilon}}<\sqrt{1+3\sqrt{2l}\epsilon}<1+3\sqrt{2l}\epsilon,
\end{equation}
we have
\begin{equation}
1-3\sqrt{2l}\epsilon<\frac{1}{\nu}<1+3\sqrt{2l}\epsilon.
\end{equation}
Assuming $ b_j>0 $,
then 
\begin{eqnarray}
(b_j-\epsilon)(1-3\sqrt{2l}\epsilon)<\frac{\tilde{b}_j}{\nu}<(b_j+\epsilon)(1+3\sqrt{2l}\epsilon)~~ & \text{if}~~b_j-\epsilon>0, \\
(b_j-\epsilon)(1+3\sqrt{2l}\epsilon)<\frac{\tilde{b}_j}{\nu}<(b_j+\epsilon)(1+3\sqrt{2l}\epsilon)~~ & \text{if}~~b_j-\epsilon<0.
\end{eqnarray}
For both cases,
\begin{equation}
(b_j+\epsilon)(1+3\sqrt{2l}\epsilon)= b_j+(1+3\sqrt{2l}b_j)\epsilon+3\sqrt{2l}\epsilon^2<b_j+(2+3\sqrt{2l}b_j)\epsilon.
\end{equation}
For $b_j-\epsilon>0$,
\begin{equation}
(b_j-\epsilon)(1-3\sqrt{2l}\epsilon)>b_j-(1+3\sqrt{2l}b_j)\epsilon.
\end{equation}\\
For $b_j-\epsilon<0$,
\begin{equation}
(b_j-\epsilon)(1+3\sqrt{2l}\epsilon)=b_j-(1-3\sqrt{2l}b_j)\epsilon-3\sqrt{2l}\epsilon^2>b_j-(2+3\sqrt{2l}b_j)\epsilon.
\end{equation}
So if $ b_j>0 $,
\begin{equation}
\left|b_j-\frac{\tilde{b}_j}{\nu}\right|<(2+3\sqrt{2l}b_j)\epsilon.
\end{equation}
Similarly, if $ b_j<0 $,
\begin{equation}
\left|b_j-\frac{\tilde{b}_j}{\nu}\right|<(2-3\sqrt{2l}b_j)\epsilon.
\end{equation}\\
So we obtain
\begin{equation}
\left|b_j-\frac{\tilde{b}_j}{\nu}\right|<(2+3\sqrt{2l}|b_j|)\epsilon.
\end{equation}
Similarly,
\begin{equation}
\left|c_j-\frac{\tilde{c}_j}{\nu}\right|<(2+3\sqrt{2l}|c_j|)\epsilon.
\end{equation}
Recall that $|\tilde{\psi}\rangle=\sum\tilde{\alpha}_j|j\rangle$, where $\tilde{\alpha_j}=(\tilde{b}_j+i\tilde{c}_j)/\nu$. Then
\begin{eqnarray}
\left|\alpha_j-\tilde{\alpha}_j\right|&=&\left|b_j+ic_j-\frac{\tilde{b}_j}{\nu}-i\frac{\tilde{c}_j}{\nu}\right|\nonumber\\
&\leq&\left|b_j-\frac{\tilde{b}_j}{\nu}\right|+\left|c_j-\frac{\tilde{c}_j}{\nu}\right|\nonumber\\
&<&(4+3\sqrt{2l}(|b_j|+|c_j|))\epsilon.
\end{eqnarray}
Therefore,
\begin{eqnarray}
1-\left|\langle\psi|\tilde{\psi}\rangle\right|^2
&=&\left(1-\left|\langle\psi|\tilde{\psi}\rangle\right|\right)\left(1+\left|\langle\psi|\tilde{\psi}\rangle\right|\right)\nonumber\\
&\leq&2\left(1-\left|\langle\psi|\tilde{\psi}\rangle\right|\right)\nonumber\\
&\leq&2-\langle\psi|\tilde{\psi}\rangle-\langle\tilde{\psi}|\psi\rangle\nonumber\\
&=&\sum^l_{j=1}|\alpha_j|^2+|\tilde{\alpha}_j|^2-\alpha_j\tilde{\alpha}_j^*-\alpha_j^*\tilde{\alpha}_j\nonumber\\
&=&\sum^l_{j=1}|\alpha_j-\tilde{\alpha}_j|^2\nonumber\\
&<&\sum^l_{j=1}(4+3\sqrt{2l}(|b_j|+|c_j|))^2\epsilon^2,
\end{eqnarray}
where $(4+3\sqrt{2l}(|b_j|+|c_j|))^2=16+24\sqrt{2l}(|b_j|+|c_j|)+18l(|b_j|+|c_j|)^2$.
Using $(|b_j|+|c_j|)^2\leq2(|b_j|^2+|c_j|^2)$ and Eq. \eqref{Cauchy}, we obtain
\begin{eqnarray}
1-\left|\langle\psi|\tilde{\psi}\rangle\right|^2
&<&\sum^l_{j=1}(16+24\sqrt{2l}(|b_j|+|c_j|)+36l(|b_j|^2+|c_j|^2))\epsilon^2\nonumber\\
&\leq&(16l+48l+36l)\epsilon^2=100l\epsilon^2.
\end{eqnarray}
Then,
\begin{equation}
D(|\psi\rangle,|\tilde{\psi}\rangle)=\sqrt{1-\left|\langle\psi|\tilde{\psi}\rangle\right|^2}<10\sqrt{l}\epsilon.
\end{equation}\\
Thus, $D(|\psi\rangle,|\tilde{\psi}\rangle)<10\sqrt{l}\epsilon$.
\end{proof}

\begin{lem}\label{pert2}
Let $ \{P_k\} $ be a POVM, with $ p_k=\langle\psi| P_k|\psi\rangle $, $ \tilde{p}_k=\langle\tilde{\psi}| P_k|\tilde{\psi}\rangle $. Then 
$ |p_k-\tilde{p}_k|<20\sqrt{l}\epsilon $.
\end{lem}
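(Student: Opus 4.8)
The plan is to reduce the statement to the fact that the trace distance controls the difference of measurement statistics, which is precisely the quantity that Lemma~\ref{pert} bounds. I work in the setup inherited from Lemma~\ref{pert}, where $|\psi\rangle$ and $|\tilde{\psi}\rangle$ are the exact and rounded-amplitude states. Setting $\rho=|\psi\rangle\langle\psi|$ and $\tilde\rho=|\tilde{\psi}\rangle\langle\tilde{\psi}|$, I rewrite the quantity of interest as a single trace against the difference of density operators:
\[
p_k-\tilde p_k=\langle\psi|P_k|\psi\rangle-\langle\tilde{\psi}|P_k|\tilde{\psi}\rangle=\mathrm{tr}\!\left(P_k(\rho-\tilde\rho)\right).
\]
This recasts the problem as bounding the overlap of a POVM element with $\rho-\tilde\rho$, which is exactly the kind of object the trace norm is designed to handle.

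The key step is a standard operator inequality. Because $\{P_k\}$ is a POVM, each element satisfies $0\le P_k\le I$, so all its eigenvalues lie in $[0,1]$ and its operator norm obeys $\|P_k\|_\infty\le 1$. Applying the matrix H\"older inequality $|\mathrm{tr}(AB)|\le\|A\|_\infty\|B\|_1$ with $A=P_k$ and $B=\rho-\tilde\rho$ (see, e.g., Ref.~\cite{nc}) gives
\[
|p_k-\tilde p_k|\le\|P_k\|_\infty\,\|\rho-\tilde\rho\|_1\le\|\rho-\tilde\rho\|_1 .
\]
I then recall that the trace distance used in Lemma~\ref{pert} is $D(|\psi\rangle,|\tilde{\psi}\rangle)=\tfrac12\|\rho-\tilde\rho\|_1$, so $\|\rho-\tilde\rho\|_1=2D(|\psi\rangle,|\tilde{\psi}\rangle)$. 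Invoking Lemma~\ref{pert}, whose hypothesis $\epsilon<(6\sqrt{2l})^{-1}$ is carried over, yields $D(|\psi\rangle,|\tilde{\psi}\rangle)<10\sqrt{l}\,\epsilon$, and hence $|p_k-\tilde p_k|<2\cdot 10\sqrt{l}\,\epsilon=20\sqrt{l}\,\epsilon$, as claimed.

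I do not expect a genuine obstacle, since this is essentially the textbook statement that measurement outcome probabilities are contractive under trace distance. The only subtlety worth flagging is the normalization convention: using $\|\rho-\tilde\rho\|_1=2D$ (rather than $D$) is exactly what produces the factor of two and thus the stated constant $20$ instead of the sharper $10$. As an equivalent route that sidesteps H\"older, one could instead write $|\delta\rangle=|\tilde{\psi}\rangle-|\psi\rangle$, expand $p_k-\tilde p_k=-\langle\psi|P_k|\delta\rangle-\langle\delta|P_k|\psi\rangle-\langle\delta|P_k|\delta\rangle$, and bound each term by Cauchy--Schwarz using $\|P_k\|_\infty\le 1$; but the trace-distance argument is cleaner and plugs directly into the preceding lemma, so I would present that one.
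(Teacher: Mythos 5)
Your proof is correct and takes essentially the same route as the paper's: the paper's one-line argument cites Theorem 9.1 of Ref.~\cite{nc} (trace distance controls differences of measurement statistics, with the factor of $2$ coming from the $\tfrac12\|\cdot\|_1$ convention) and then applies Lemma~\ref{pert}, exactly as you do. The only difference is cosmetic --- you prove the contractivity inline via the H\"older bound $|\mathrm{tr}(P_k(\rho-\tilde\rho))|\le\|P_k\|_\infty\|\rho-\tilde\rho\|_1$ on a single outcome, whereas the paper bounds $|p_k-\tilde p_k|$ by the full sum $\sum_k|p_k-\tilde p_k|\le 2D(|\psi\rangle,|\tilde{\psi}\rangle)$ before invoking Lemma~\ref{pert}.
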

\begin{proof}
By Theorem 9.1 in \cite{nc}, we directly obtain
 $ 
 |p_k-\tilde{p}_k|\leq \sum^l_{k=1}|p_k-\tilde{p}_k|\leq 2D(|\psi\rangle,|\tilde{\psi}\rangle)<20\sqrt{l}\epsilon $, where the last step comes from Lemma \ref{pert}.
\end{proof}

\section{Detailed Proof of Theorem \ref{main}}\label{app:zero}

\newtheorem*{thm:main}{Theorem \ref{main}}
\begin{thm:main}
For $m\in \tilde  o(n)$, $\bar Q_{CC}(\EXC_{n,m,0}) \in \Omega(\log n)$. 
\end{thm:main}
\begin{proof}
Suppose that for $\EXC_{n,m,0}$ where $m\in \tilde  o(n)$, there exists a winning quantum strategy $\Pi_Q$ such that $Q_{CC}(\Pi_Q)\equiv q\in o(\log n)$. $q=\log|\mathcal{H}|$, where $\mathcal{H}$ is the Hilbert space of the largest quantum message. Then based on $\Pi_Q$, we can devise a corresponding classical strategy $\Pi_C$ with $o(n)$ bits of communication, which contradicts Lemma \ref{classical_lower_bound}, therefore negating the existence of $\Pi_Q$.

Most generally, $\Pi_Q$ can be divided into three steps: (i) Alice prepares a quantum message (state) of size at most $q$, based on her $n$-bit string $ x$; (ii) Alice sends the state to Bob; (iii) Bob feeds the state into his local quantum computation, and obtains an $m$-bit string $ z$ such that $ z\neq \mathcal M_y( x)$ according to the output (measurement outcome). 
Note that the quantum messages can in general be mixed, but each of them can always be encoded as a $2q$-qubit pure state by purification using an ancilla space of $q$ qubits (append maximally mixed qubits whn the original state contains less than $q$ qubits). Denote the pure message corresponding to $x$ as $|\psi(x)\rangle$.
In addition, both players agree on a fixed basis for the matrix representation of operators and amplitudes of state vectors beforehand.
 
 The essence of constructing $\Pi_C$ is to classically simulate all steps of $\Pi_Q$. The basic procedure goes as follows.
First, Alice prepares a classical message $C(|\psi_x\rangle)$ that approximately encodes $|\psi_x\rangle=\sum^{2^{2q}}_{j=1} \alpha_j|j\rangle=\sum^{2^{2q}}_{j=1}(b_j+ic_j)|j\rangle$ ($\{|j\rangle\}$ is the predetermined basis), by registering the real ($b_j$) and imaginary parts ($c_j$) of all amplitudes ($\alpha_j$) to some desired accuracy $\bar\epsilon$ (the approximations are denoted by $\tilde b_j$ and $\tilde c_j$), and then sends it to Bob. Note that the size of $C(|\psi_x\rangle)$, i.e., the communication cost of $\Pi_C$, depends on $\bar\epsilon$: it grows as higher precision is desired. In $\Pi_Q$, Bob's local strategy can always be modeled as a quantum circuit with $|\psi(x)\rangle$ being the input, i.e., quantum operations followed by a generalized measurement by the principle of deferred measurement \cite{nc}, which is altogether equivalent to some POVM $\{P_i\}$.
Although $\{P_i\}$ may contain an arbitrary number of elements in principle, there are only $2^m$ possible strings that Bob can eventually output: $g(P_i)=z$, where $z$ is an $m$-bit string. Therefore, all $P_i$'s corresponding to the same $z$ can be combined as an element $P'_z$ of a new POVM $\{P'_z\}$ by
\begin{equation}
P'_z = \sum_{g(P_i)=z}P_i,
\end{equation}
or in the continuum limit where the elements are labeled by a continuous variable $\mu$,
\begin{equation}
P'_z = \int_{g(P(\mu))=z}d\mu P(\mu).
\end{equation}
Due to the convexity of the set of all non-negative Hermitian operators (valid POVM elements), $\{P'_z\}$ forms a discrete effective POVM with $2^m$ elements labeled by $z$.
A subtlety here is that the amplitude vector encoded in $C(|\psi_x\rangle)$ is not necessarily normalized. Bob first normalizes the amplitude vector by dividing each component with the 2-norm $\nu\equiv{\sqrt{\sum_{j=1}^{2^{2q}}(\tilde{b}_j^2+\tilde{c}_j^2)}} $, and then applies Born's rule to compute the approximate probability of obtaining each $z$:
\begin{equation}\label{ppz}
p'_z=\frac{1}{\nu^2}\sum_{j,k=1}^{2^{2q}}(\tilde{b}_j\tilde{b}_k+i\tilde{b}_j\tilde{c}_k-i\tilde{c}_j\tilde{b}_k+\tilde{c}_j\tilde{c}_k)P'_{z,jk},
\end{equation}
where $P'_{z,jk}$ is the $(j,k)$-th entry of $P'_{z}$. The requirement that $\Pi_Q$ never fails indicates that the probability of outputting the POVM elements corresponding to a wrong answer is exactly zero. As indicated by Lemma \ref{pert2}, the approximate distribution $\{p'_z\}$ can be arbitrarily close to the true one (denoted by $\{p_z\}$) when $\bar\epsilon$ is sufficiently small, so the probability corresponding to the wrong answer ${\mathcal{M}_y(x)}$ calculated by Eq. (\ref{ppz}) in $\Pi_C$ is well bounded. Therefore Bob simply sets an appropriate threshold value $\bar\delta(\bar\epsilon)$ that $p'_{\mathcal{M}_y(x)}$ cannot exceed, and refuses to output any $z$ with $p'_z<\bar\delta(\bar\epsilon)$. As long as there exists an answer above this threshold, this protocol is guaranteed to succeed. 

Finally, we determine the appropriate values of $\bar\epsilon$ and $\bar\delta$ in the above protocol $\Pi_C$. To guarantee the existence of at least one valid output, it is sufficient that the upper bound on perturbation on all $p_z$'s, $\delta$, satisfies
\begin{equation}\label{delta}
\delta\equiv \sup_z |p_z-p'_z|< 2^{-m}.
\end{equation}
Then we can simply set the threshold value to
\begin{equation}\label{delta0}
\bar\delta = 2^{-m},
\end{equation}
i.e., Bob only outputs a $z$ with $p'_z\geq 2^{-m}$, which always exists.
By Lemma \ref{pert2}, $\delta<20\bar\epsilon 2^{q}$. Then, according to Eq. (\ref{delta0}), we can set
\begin{equation}\label{eps}
\bar\epsilon = \frac{1}{20}2^{-(m+q)},
\end{equation}
so that $\delta<\bar{\delta}$.
In summary, $\Pi_C$ runs as introduced with $\bar\epsilon$ and $\bar\delta$, respectively, specified by Eqs. (\ref{eps}) and (\ref{delta0}).

By Lemma \ref{cbits}, $C_{CC}(\Pi_C)$ with the above accuracy scales as $O\left((m+q)2^{2q}\right)$. For $m\in \tilde  o(n)$, $m+q\in O(n^\beta)$ always holds, where $0<\beta<1$. Since $2^{2q}\in o(n^\zeta)$ for any positive constant $\zeta$, we simply set $\zeta=1-\beta$, and it can be directly seen that $C_{CC}(\Pi_C)\in o(n^{\beta+\zeta})=o(n)$.
Since $m\in \tilde o(n)$ is within the scope of application of Lemma \ref{classical_lower_bound}, we have reached a contradiction. 
\end{proof}

\section{Detailed Proof of Theorem \ref{error}}\label{appb}

Before presenting the proof, we note that a key point of this theorem is that overhead in communication cost of a successful classical simulation is dependent on the scaling of $(2^{-m}-\gamma)$.
It was shown in an early version of Ref.\ \cite{comm} that only one bit of classical communication is needed for $m\geq\sqrt{n},\gamma=2^{-(m+1)}$. We now sketch the argument here.
Suppose that Alice sends a single bit to Bob indicating whether ${x}$ contains a majority of zeros or a majority of ones. If it is the former case, Bob answers with $\vec{1}\in\{0,1\}^m$ for all $y$, while if it is the latter case, he answers with $\vec{0}\in\{0,1\}^m$.
Without loss of generality, assume that ${x}$ contains a majority of zeros and Bob thus answers with $\vec{1}$. If we denote the number of ones in ${x}$ by $j$, $0\leq j\leq \frac{n}{2}$, the fraction of $y$ for which Bob makes an error, $\mathcal{M}_y\left({x}\right)=\vec{1}$, is given by:
\begin{equation}
\textrm{Probability of error for given }{x}:\left\{
\begin{array}{rl}
\frac{{j \choose m}}{{n \choose m}} & \textrm{for } m\leq j\leq \frac{n}{2},\\
0 & \textrm{for } 0\leq j<m.
\end{array} \right.
\end{equation}
Combining with the fact that the number of ${x}$ with Hamming weight $j$ is ${n \choose j}$, the total probability of error of the strategy, $\epsilon_{t}$, is given by:
\begin{eqnarray}
\epsilon_{t}&=&\frac{\sum_{i=m}^{n/2}{n \choose i}{i \choose m}}{2^{n-1}{n \choose m}}\nonumber\\
&<&\frac{{n \choose \frac{n}{2}}\sum_{i=m}^{n/2}{i \choose m}}{2^{n-1}{n \choose m}}\nonumber\\
&=&\frac{{n \choose \frac{n}{2}}{\frac{n}{2}+1 \choose m+1}}{2^{n-1}{n \choose m}}\nonumber\\
&=&\frac{\frac{n}{2}+1}{m+1}\frac{{n \choose \frac{n}{2}}{\frac{n}{2} \choose m}}{2^{n-1}{n \choose m}}.
\end{eqnarray}
For large $n$ and $m=\sqrt{n}$,
\begin{eqnarray}
\epsilon_{t}&\sim&\frac{1}{2}\sqrt{n}\frac{4^{n/2}}{\sqrt{\frac{\pi n}{2}}}\frac{1}{2^{\sqrt{n}}}\frac{1}{\sqrt{e}}\frac{1}{2^{n-1}}\nonumber\\
&=&\frac{1}{\sqrt{\frac{e\pi}{2}}2^{\sqrt{n}}}\nonumber\\
&<&\frac{1}{2^{\sqrt{n}+1}}.
\end{eqnarray}
Note that in the approximation we used Stirling's approximation for the ${n \choose n/2}$ term, and that
\begin{equation}
\frac{{n/2 \choose m}}{{n \choose m}}\sim\frac{1}{2^m}e^{-1/2}.
\end{equation}
Thus, for $m=\sqrt{n}$, there exists a strategy using one bit of classical communication, when the allowed probability of error is greater than ${1}/{2^{\sqrt{n}+1}}$.
Therefore, it makes sense to pay attention to the regime of even smaller probability of error only, when $m\geq\sqrt{n}$. For this case, the conclusion reduces to a simpler form (Corollary \ref{onebit}).
However for $m<\sqrt{n}$ (where more communication should be needed), it is unsettled whether a non-trivial probability of error can be achieved with constant amount of communication. For now, we conjecture that for $\gamma=2^{-(m+1)}$ and $m\in\Omega({\rm poly}(n))$, $C_{CC}\in O(1)$. However, we have numerical results which indicate that for $m\in o({\rm poly}(n))$, an $O(1)$ size of classical communication cannot guarantee any probability of error that is smaller than $2^{-m}$ in the limit of large $n$. 

The most general form of our rigorous conclusion about the classical simulation when error is allowed goes as follows:

\newtheorem*{thm:error}{Theorem \ref{error}}
\begin{thm:error}[Error-bounded variant of Theorem \ref{main}]
Consider some $h(m)$ such that $\gamma$ satisfies $-\log(2^{-m}-\gamma)\in O(h(m))$. Suppose that for $\EXC_{n,m,\gamma}$ with $\gamma<2^{-m}$, there exists a winning quantum strategy $\Pi^\gamma_Q$ such that $Q_{CC}(\Pi^\gamma_Q)\equiv s\in O(\xi(n))$. Then, one can construct a classical strategy $\Pi^{0^+}_C$ such that $C_{CC}(\Pi^{0^+}_C)=[O(h(m))+O(\xi(n))]2^{O(\xi(n))}$, whose probability of error can be made arbitrarily small. 
\end{thm:error}
\begin{proof}
We revise Bob's local part of the protocol presented in Theorem \ref{main} to devise this $\Pi^{0^+}_C$ as follows. As for the zero error game, given input $x$, Alice prepares a classical message that encodes the real and imaginary parts of all amplitudes of the quantum message $|\psi^\gamma(x)\rangle$ in $\Pi^\gamma_Q$ to accuracy $\bar\epsilon_\gamma$ using $O\left(2^{2s}\log(1/\bar\epsilon_\gamma)\right)$ bits, and sends it to Bob, 
who then normalizes the amplitude vector. 
Instead of classically calculating the probability distribution of the output as in $\Pi_C$, Bob now resorts to local quantum resources. He simply prepares a new quantum state $|\tilde\psi^\gamma(x)\rangle$ according to the normalized state vector (by Lemma \ref{pert}, this state remains close to the original one when $\bar\epsilon_\gamma$ is small), and then feeds it into his original local quantum computation. By Lemma \ref{pert2}, the probability of outputting the wrong answer satisfies:
\begin{equation}\label{p1}
p'_{\mathcal{M}_y( x)}<\gamma+20\bar\epsilon_\gamma 2^{s}.
\end{equation}
As long as $\mathcal{M}_y(x)$ is not the output with the largest probability, i.e., \begin{equation}\label{p2}
p'_{\mathcal{M}_y( x)}<2^{-m},
\end{equation} 
Bob can apply amplitude amplification to suppress the probability of error: he simply repeats his local protocol for $t$ times (he can use the classical message to prepare as many copies of $|\tilde\psi^\gamma(x)\rangle$ as he wants), and outputs the string $z$ that comes out for most times. We denote the probability of error after the whole procedure by $\gamma'$. Then, by the Chernoff bound, for any $\tau>0$, there exists a $\bar t$ such that as long as $t>\bar t$, $\gamma'<\tau$. That is, $\gamma'$ can be made arbitrarily small simply by increasing $t$. Combining Eqs. (\ref{p1}) and (\ref{p2}), we can set
\begin{equation}
\bar\epsilon_\gamma=\frac{2^{-m}-\gamma}{20}2^{-s}
\end{equation}
in the protocol. Since $-\log(2^{-m}-\gamma)\in O(h(m))$ and $s\in O(\xi(n))$, $\log(1/\bar\epsilon_\gamma)\in O(h(m))+O(\xi(n))$. Therefore, $C_{CC}(\Pi_C^{0^+})\in [O(h(m))+O(\xi(n))]2^{O(\xi(n))}$.
Note that the no-cloning theorem is not violated since Bob does not need to copy quantum states, and we do not care about the scaling of $t$ since local computational resource is not limited.
\end{proof}

As argued earlier, by restricting $m\geq\sqrt{n}$, any $\gamma\geq 2^{-(m+1)}$ becomes trivial. Then Theorem \ref{error} takes a simpler form because $\log(2^{-m}-\gamma)\in O(m)$:
\begin{cor}\label{onebit}
Suppose that for $\EXC_{n,m,\gamma}$ with $m\geq\sqrt{n}$ and $\gamma\leq 2^{-(m+1)}$, there exists a winning quantum strategy $\Pi^\gamma_Q$ such that $Q_{CC}(\Pi^\gamma_Q)\equiv s\in O(\xi(n))$. Then, one can construct a classical strategy $\Pi^{0^+}_C$ such that $C_{CC}(\Pi^{0^+}_C)\in[O(m)+O(\xi(n))]2^{O(\xi(n))}$, whose probability of error can be made arbitrarily small.
\end{cor}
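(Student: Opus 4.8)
The plan is to observe that Corollary \ref{onebit} is a direct specialization of Theorem \ref{error}, obtained by fixing the auxiliary function $h(m)$. All that must be checked is that, under the hypotheses $m\geq\sqrt{n}$ and $\gamma\leq 2^{-(m+1)}$, one may take $h(m)=m$; that is, that $-\log(2^{-m}-\gamma)\in O(m)$. Once this is verified, the stated communication bound follows by substituting $h(m)=m$ into the conclusion of Theorem \ref{error}.

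First I would bound the gap $2^{-m}-\gamma$ from below. The hypothesis $\gamma\leq 2^{-(m+1)}=\tfrac{1}{2}2^{-m}$ gives
\begin{equation}
2^{-m}-\gamma\geq 2^{-m}-2^{-(m+1)}=2^{-(m+1)},
\end{equation}
and taking logarithms yields $-\log(2^{-m}-\gamma)\leq m+1\in O(m)$. This certifies that $h(m)=m$ is an admissible choice in the statement of Theorem \ref{error}.

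With this choice in hand, I would simply invoke Theorem \ref{error}: any winning quantum strategy $\Pi^\gamma_Q$ with $Q_{CC}(\Pi^\gamma_Q)=s\in O(\xi(n))$ yields a classical strategy $\Pi^{0^+}_C$ of communication cost $[O(h(m))+O(\xi(n))]2^{O(\xi(n))}=[O(m)+O(\xi(n))]2^{O(\xi(n))}$ whose probability of error can be made arbitrarily small, which is precisely the assertion. There is no substantive obstacle here, since the result is a book-keeping corollary of an already-established theorem; the only genuine content is the one-line logarithmic estimate above. The one point worth emphasizing is that the restriction $m\geq\sqrt{n}$ is exactly what makes $\gamma\leq 2^{-(m+1)}$ the nontrivial regime: as noted in the surrounding discussion, for $m\geq\sqrt{n}$ any larger $\gamma$ already admits a single-bit classical protocol, so it is precisely below this threshold that the exponential-overhead simulation of Theorem \ref{error} becomes the relevant tool.
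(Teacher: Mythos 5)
Your proposal is correct and follows exactly the paper's own route: the paper derives Corollary~\ref{onebit} as a direct specialization of Theorem~\ref{error}, noting that under $\gamma\leq 2^{-(m+1)}$ one has $-\log(2^{-m}-\gamma)<m+1\in O(m)$, so $h(m)$ may be taken to be $m$. Your one-line estimate $2^{-m}-\gamma\geq 2^{-(m+1)}$ is precisely the required verification, and your remark about the triviality of $\gamma\geq 2^{-(m+1)}$ for $m\geq\sqrt{n}$ matches the paper's surrounding discussion.
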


\section{Detailed Proof of Theorem \ref{Cerror}}\label{app:Cerror}
Suppose that Bob is allowed to make an error with probability $\gamma$. In other words, for each pair of inputs $({x},y)$, with probability less than or equal to $\gamma$, Bob is allowed to output an $m$-bit string ${z}$ such that ${z}=\mathcal{M}_y\left({x}\right)$. How much classical communication is required from Alice so that Bob does not err with probability more than $\gamma$?
To answer this question, the following definitions and results will be useful. First we formally define the one-way, public-coin randomized communication complexity:
\begin{defn}[One-way, public-coin randomized communication complexity]
For a relation, $f\subseteq\mathcal{X}\times\mathcal{Y}\times\mathcal{Z}$, let $R_{\epsilon}^{1,\text{pub}}\left(f\right)$ denote the communication complexity of the best one-way, public-coin randomized protocol that computes $f$ with error at most $\epsilon$ on all inputs. When referring specifically to the exclusion game, we will replace this by $\bar C_{CC}\left(\text{EXC}_{n,m,\epsilon}\right)$.
\end{defn}

A useful tool for obtaining bounds on the communication complexity is that of rectangle bounds. To define these, we first define (for one-way protocols) \emph{rectangles} and \emph{$\epsilon$-monochromatic functions}.
\begin{defn}[One-way rectangles]
A one-way rectangle, $R$, is defined to be a set ${S}\times\mathcal{Y}$, where $S\subseteq\mathcal{X}$. For a distribution, $\mu$, over $\mathcal{X}\times\mathcal{Y}$, let $\mu_R$ be the distribution formed from $\mu$ by conditioning on $R$. Let $\mu\left(R\right)$ be the probability of the event $R$ under the distribution $\mu$.
\end{defn}
\begin{defn}[One-way $\epsilon$-monochromatic] \label{def:eps mon}
Let $f\subseteq\mathcal{X}\times\mathcal{Y}\times\mathcal{Z}$ be a relation. A distribution, $\lambda$, on $\mathcal{X}\times\mathcal{Y}$ is one-way $\epsilon$-monochromatic for $f$ if there exists a function, $g:\mathcal{Y}\rightarrow\mathcal{Z}$, such that:
\begin{equation}
P_{XY\sim\lambda}\left[\left(X,Y,g(Y)\right)\in f\right]\geq1-\epsilon.
\end{equation}
\end{defn}
With these in place, we now define \emph{rectangle bounds} as follows:
\begin{defn}[Rectangle bound]
Let $f\subseteq\mathcal{X}\times\mathcal{Y}\times\mathcal{Z}$ be a relation. For a distribution, $\mu$, on $\mathcal{X}\times\mathcal{Y}$, the one-way rectangle bound is:
\begin{equation}
\text{rec}_{\epsilon}^{1,\mu}\left(f\right)=\min_R\left\{\log_2\frac{1}{\mu(R)}:R \text{ is one-way rectangle and }\mu_R \text{ is one-way } \epsilon\textrm{-monochromatic.}\right\}.
\end{equation}
The one-way rectangle bound for $f$ is:
\begin{equation}
\text{rec}_{\epsilon}^{1}\left(f\right)=\max_{\mu}\text{rec}_{\epsilon}^{1,\mu}\left(f\right).
\end{equation}
If the above maximization is restricted to product distributions, we can also define:
\begin{equation}
\text{rec}_{\epsilon}^{1,\left[\right]}\left(f\right)=\max_{\mu:\text{product}}\text{rec}_{\epsilon}^{1,\mu}\left(f\right).
\end{equation}
\end{defn}

The utility of rectangle bounds to the problem at hand is given by the following result obtained from \cite{jain2008direct}:
\begin{thm}[\cite{jain2008direct}] \label{Rectangle Bound}
Let $f\subseteq\mathcal{X}\times\mathcal{Y}\times\mathcal{Z}$ be a relation and let $\epsilon\in\left[0,1/6\right]$. Then:
\begin{equation}
R^{1,{\rm pub}}_{\epsilon}\left(f\right)\in\Omega\left({\rm rec}_{\epsilon}^{1,\left[\right]}\left(f\right)\right).
\end{equation}
\end{thm}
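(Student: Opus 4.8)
The plan is to lower-bound the communication of any one-way public-coin protocol by exhibiting a single large, $\epsilon$-monochromatic rectangle; because $\mathrm{rec}^{1,[]}_{\epsilon}(f)$ is a \emph{minimum} over monochromatic rectangles, producing just one such rectangle of mass $2^{-O(c)}$ already forces $\mathrm{rec}^{1,[]}_{\epsilon}(f)\le O(c)$ and hence $c\in\Omega(\mathrm{rec}^{1,[]}_{\epsilon}(f))$. First I would fix the product distribution $\mu$ witnessing $\mathrm{rec}^{1,[]}_{\epsilon}(f)$ and pass to the deterministic setting: if $\Pi$ is a public-coin protocol of cost $c=R^{1,\mathrm{pub}}_{\epsilon}(f)$ that errs at most $\epsilon$ on every input, then averaging the per-input error over the public coins and then over $\mu$ shows that some fixing of the coins yields a deterministic one-way protocol $\Pi'$ of cost at most $c$ with error at most $\epsilon$ under $\mu$.

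Next I would read off the rectangle structure of $\Pi'$. Alice's single message depends only on $x$ and takes at most $2^{c}$ values, so it partitions $\mathcal{X}$ into sets $S_a$; writing $R_a=S_a\times\mathcal{Y}$ gives at most $2^{c}$ disjoint one-way rectangles with $\mu=\sum_a\mu(R_a)\,\mu_{R_a}$, and on message $a$ Bob outputs a fixed function $g_a:\mathcal{Y}\to\mathcal{Z}$. Letting $\mathrm{err}_a=\Pr_{(X,Y)\sim\mu_{R_a}}[(X,Y,g_a(Y))\notin f]$ be the conditional error, the global guarantee reads $\sum_a\mu(R_a)\,\mathrm{err}_a\le\epsilon$. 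A Markov argument then shows that the rectangles whose $\mathrm{err}_a$ is bounded by a fixed constant multiple of $\epsilon$ carry a constant fraction of the total mass; since there are at most $2^{c}$ of them, averaging produces one such rectangle $R_{a^*}$ with $\mu(R_{a^*})=\Omega(2^{-c})$. This $R_{a^*}$ is $O(\epsilon)$-monochromatic via $g_{a^*}$, so $\log_2(1/\mu(R_{a^*}))\le c+O(1)$, whence $\mathrm{rec}^{1,\mu}_{O(\epsilon)}(f)\le c+O(1)$ and, maximizing over product $\mu$, the desired bound up to the error-parameter adjustment discussed below.

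The hard part will be matching the error parameters on the two sides. The averaging/Markov step only yields a rectangle that is $O(\epsilon)$-monochromatic (say $3\epsilon$-monochromatic), whereas the statement asks for the \emph{same} $\epsilon$. This is exactly where the hypothesis $\epsilon\le 1/6$ enters: it guarantees $3\epsilon\le 1/2$, so the monochromaticity condition remains nontrivial, and it lets one absorb the constant-factor degradation of the error into the $\Omega(\cdot)$ by invoking the mild robustness of the rectangle bound under a constant rescaling of its error parameter. Controlling this degradation cleanly — rather than the otherwise routine partition-and-average manipulation — is the delicate point, and it is precisely why the theorem carries an explicit upper limit on $\epsilon$ and is phrased with an $\Omega$ rather than an exact inequality.
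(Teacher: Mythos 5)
First, a point of reference: the paper does not prove this statement at all --- Theorem~\ref{Rectangle Bound} is imported verbatim from Jain, Klauck, and Nayak \cite{jain2008direct}, so there is no internal proof to compare against, and your attempt must be judged against the standard corruption-/rectangle-bound argument that the cited source makes rigorous. Your skeleton is indeed that standard argument, and those steps are sound: fixing the product distribution $\mu$ witnessing $\mathrm{rec}^{1,[]}_{\epsilon}(f)$, averaging over the public coins to extract a deterministic one-way protocol of cost $c$ with distributional error at most $\epsilon$ under $\mu$, partitioning $\mathcal{X}$ into at most $2^{c}$ message classes $S_a$ so that the sets $R_a=S_a\times\mathcal{Y}$ are one-way rectangles with answer functions $g_a$, and running a Markov argument to find a rectangle $R_{a^*}$ with $\mu(R_{a^*})=\Omega(2^{-c})$ and conditional error $O(\epsilon)$.

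The genuine gap is exactly the step you defer, and the repairs you gesture at do not work. Your argument yields $\mathrm{rec}^{1,\mu}_{3\epsilon}(f)\le c+O(1)$, while the theorem demands $c\in\Omega\bigl(\mathrm{rec}^{1,[]}_{\epsilon}(f)\bigr)$. Since any $\epsilon$-monochromatic rectangle is automatically $3\epsilon$-monochromatic, the candidate set for $\mathrm{rec}_{3\epsilon}$ is larger and hence $\mathrm{rec}^{1,\mu}_{3\epsilon}(f)\le\mathrm{rec}^{1,\mu}_{\epsilon}(f)$: your inequality is strictly weaker than the one required, i.e., the error parameter degrades in precisely the wrong direction. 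The ``mild robustness'' you invoke, namely $\mathrm{rec}^{1,[]}_{\epsilon}(f)\le O\bigl(\mathrm{rec}^{1,[]}_{3\epsilon}(f)\bigr)$, is not a routine fact --- it is the entire remaining content of the theorem. The usual route to such robustness, amplifying the protocol to error $\epsilon/3$ by repetition and rerunning your argument, is unavailable here because $f$ is a \emph{relation}: when many outputs are valid, Bob can neither verify a candidate answer nor take a meaningful majority over independent runs (the amplification in the paper's Theorem~\ref{error} succeeds only because of special structure --- the unique forbidden answer provably has probability below the maximum --- not by a generic argument). Two further misconceptions compound this: the $\Omega(\cdot)$ absorbs multiplicative constants in \emph{communication}, not in the error index of $\mathrm{rec}$, so asymptotic notation cannot convert $\mathrm{rec}_{3\epsilon}$ into $\mathrm{rec}_{\epsilon}$; and the hypothesis $\epsilon\le 1/6$, which merely keeps $3\epsilon$-monochromaticity nontrivial in your reading, does nothing to upgrade a $3\epsilon$-monochromatic rectangle to an $\epsilon$-monochromatic one. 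As it stands, your sketch proves a legitimate but weaker statement, $R^{1,\mathrm{pub}}_{\epsilon}(f)\ge \mathrm{rec}^{1,[]}_{3\epsilon}(f)-O(1)$, and the bridge to the quoted theorem is missing.
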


This theorem implies the following useful characterization for the communication complexity of the exclusion game for non-zero error, $\gamma$:
\begin{lem} \label{EXC error}
To show a lower bound of $c$ for $\bar C_{CC}\left(\EXC_{n,m,\gamma}\right)$, it is sufficient to show the following. Let $S$ be any subset of $\{0,1\}^n$ of size $2^{n-c}$. Let $A_{M}=\{z(y) \in \{0,1\}^m: y\textrm{  subset of }[n]\textrm{ of size }m\}$ be any set of answers for Bob. Then for at least $\gamma$-fraction of $\{({x},y): {x} \in S, y\textrm{ a subset of }[n]\textrm{ of size }m\}$,  ${z}(y)$ is an incorrect answer for ${x}$.
\end{lem}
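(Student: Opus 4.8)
The plan is to derive the lemma directly from the rectangle-bound machinery established above, specializing Theorem~\ref{Rectangle Bound} to the exclusion relation $R_{\EXC}$ under the game's own (uniform, product) input distribution. First I would note that by definition $\bar C_{CC}(\EXC_{n,m,\gamma}) = R^{1,\mathrm{pub}}_{\gamma}(R_{\EXC})$, and that the regime of interest has $\gamma \le (n+1)^{-m} < 1/6$, so Theorem~\ref{Rectangle Bound} applies: it suffices to show $\mathrm{rec}^{1,[]}_{\gamma}(R_{\EXC}) \ge c$. Since $\mathrm{rec}^{1,[]}_{\gamma}$ is a maximum over \emph{product} distributions, it is enough to exhibit one such distribution achieving the bound, and the natural candidate is $\mu=\mu_X\times\mu_Y$ with $x$ uniform on $\{0,1\}^n$ and $y$ uniform on the $m$-subsets of $[n]$; this is manifestly a product distribution because the two inputs are drawn independently.

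Next I would unpack $\mathrm{rec}^{1,\mu}_{\gamma}$ for this $\mu$. A one-way rectangle has the form $R=S\times\mathcal{Y}$ with $S\subseteq\{0,1\}^n$, and since $\mu$ is uniform we have $\mu(R)=|S|/2^n$ while $\mu_R$ is uniform on $S\times\mathcal{Y}$. An answer set $A_M$ is exactly a function $g:\mathcal{Y}\to\{0,1\}^m$, $y\mapsto z(y)$, and because a triple lies in $R_{\EXC}$ precisely when $z(y)\ne\mathcal{M}_y(x)$, the requirement that $\mu_R$ be $\gamma$-monochromatic (Definition~\ref{def:eps mon}) translates exactly into: the fraction of \emph{incorrect} pairs, those with $z(y)=\mathcal{M}_y(x)$, over $S\times\mathcal{Y}$ is at most $\gamma$. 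Hence proving $\mathrm{rec}^{1,\mu}_{\gamma}(R_{\EXC})\ge c$ is equivalent to showing that every $\gamma$-monochromatic rectangle satisfies $\log_2(2^n/|S|)\ge c$, i.e.\ $|S|\le 2^{n-c}$.

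This is where the hypothesis of the lemma enters, through a short averaging argument. Fixing an answer set $g$, let $\rho_x$ be the fraction of $y\in\mathcal{Y}$ on which $g$ errs for $x$; then the error fraction over $S\times\mathcal{Y}$ is simply the average of $\rho_x$ over $x\in S$. Among all subsets of a fixed size $k$, the smallest possible such average is attained by the $k$ inputs of smallest $\rho_x$, and this minimal average is nondecreasing in $k$ (each newly included input has $\rho_x$ at least the running average). The lemma's hypothesis asserts that for \emph{every} $g$ and \emph{every} $S$ of size $2^{n-c}$ this average is at least $\gamma$; by the monotonicity it then holds for every $g$ and every $S$ of size at least $2^{n-c}$. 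Consequently no rectangle with $|S|>2^{n-c}$ can have error fraction below $\gamma$, so any $\gamma$-monochromatic rectangle must have $|S|\le 2^{n-c}$. This gives $\mathrm{rec}^{1,\mu}_{\gamma}\ge c$, hence $\mathrm{rec}^{1,[]}_{\gamma}\ge c$, and Theorem~\ref{Rectangle Bound} yields $\bar C_{CC}(\EXC_{n,m,\gamma})\in\Omega(c)$.

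The only delicate point I anticipate is a boundary subtlety at the threshold $\gamma$: the definition of $\gamma$-monochromaticity permits an error fraction equal to $\gamma$, whereas the hypothesis only guarantees ``at least $\gamma$'', so a critical-size rectangle could sit exactly on the boundary rather than strictly above it. This does not affect the conclusion, since Theorem~\ref{Rectangle Bound} only provides an $\Omega(\cdot)$ bound and the intended application takes $c\in\Theta(n)$ with substantial slack; if a strict separation is wanted one can either strengthen the hypothesis's ``at least $\gamma$'' to a slightly larger fraction or simply absorb the single boundary level into the constant hidden by $\Omega$. The remainder of the argument is purely a matter of correctly matching the combinatorial quantities of the lemma to the definitions of rectangle, monochromaticity, and the rectangle bound.
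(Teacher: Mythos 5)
Your proposal is correct and takes essentially the same route as the paper: both invoke Theorem~\ref{Rectangle Bound} with the uniform product distribution, view one-way rectangles as $R=S\times\mathcal{Y}$ with $\mathrm{unif}(R)=|S|/2^{n}$, and translate one-way $\gamma$-monochromaticity (Definition~\ref{def:eps mon}) into the statement that the fraction of pairs in $S\times\mathcal{Y}$ on which $z(y)=\mathcal{M}_y(x)$ is at most $\gamma$, so that the hypothesis forces every $\gamma$-monochromatic rectangle to have measure at most $2^{-c}$, giving $\mathrm{rec}^{1,\mathrm{unif}}_{\gamma}\geq c$ and hence $\bar C_{CC}\in\Omega(c)$. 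Your extra care --- verifying $\gamma\leq(n+1)^{-m}<1/6$ so the theorem applies, the averaging/monotonicity step extending the hypothesis from sets of size exactly $2^{n-c}$ to all larger sets (which the rectangle bound's minimization implicitly requires), and the boundary remark about rectangles with error exactly $\gamma$ --- fills in details the paper's own proof leaves implicit, and only strengthens the argument.
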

\begin{proof}
By Theorem \ref{Rectangle Bound} and the definition of rectangle bounds, we have:
\begin{equation}
\bar C_{CC}\left(\EXC_{n,m,\gamma}\right)\in\Omega\left(\text{rec}^{1,\textit{unif}}_{\gamma}\left(\EXC_{n,m,\gamma}\right)\right),
\end{equation}
where ``unif'' is the product, uniform distribution over $X$ and $Y$. For $R=S\times\mathcal{Y}$:
\begin{equation}
\text{unif}\left(R\right)=\frac{1}{2^c}.
\end{equation}
Thus, if we can not find a set of answers for Bob, $A_{M}$, (in the language of Definition \ref{def:eps mon}, a function $g$) such that $\textit{unif}_R$ is one-way $\epsilon$-monochromatic, then:
\begin{equation}
\text{rec}^{1,\text{unif}}_{\gamma}\left(\EXC_{n,m,\gamma}\right)> c,
\end{equation}
and $\bar C_{CC}\left(\EXC_{n,m,\gamma}\right)\in\Omega(c)$.
\end{proof}


The following fact regarding sums of binomial coefficients will also be used:
\begin{lem} \label{lem: bin sum}
For $m\in\Theta\left(n^{\alpha}\right)$, $1/2 <\alpha <1$:
\begin{equation}
n-\log_2\left(\sum_{i=0}^{m}{n\choose i}\right)\geq n-o(n).
\end{equation}
For $m=\beta n$, $0<\beta<1/2$:
\begin{equation}
n-\log_2\left(\sum_{i=0}^{m}{n\choose i}\right)\in\Omega(n).
\end{equation}
\end{lem}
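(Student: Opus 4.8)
The plan is to control the cumulative binomial sum $\sum_{i=0}^{m}\binom{n}{i}$ through the standard entropy bound, which states that for any integer $m$ with $0\le m\le n/2$,
\begin{equation}
\sum_{i=0}^{m}\binom{n}{i}\le 2^{\,n H_2(m/n)},
\end{equation}
where $H_2(p)=-p\log_2 p-(1-p)\log_2(1-p)$ denotes the binary entropy function (written $H_2$ to avoid collision with the Shannon entropy $H$ used earlier). In both regimes of interest the ratio $m/n$ is eventually at most $1/2$ — it tends to $0$ in the first case and equals the constant $\beta<1/2$ in the second — so this inequality applies throughout. Taking $\log_2$ of both sides reduces each claim to an estimate of $n H_2(m/n)$, and since both statements assert a lower bound on $n-\log_2(\cdot)$, I only ever need the upper bound supplied by the entropy inequality.

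For the regime $m\in\Theta(n^{\alpha})$ with $1/2<\alpha<1$, I would invoke the small-argument asymptotics of the entropy, namely $H_2(p)=p\log_2(1/p)+O(p)$ as $p\to 0$. Substituting $p=m/n$ yields
\begin{equation}
n H_2(m/n)=m\log_2(n/m)+O(m).
\end{equation}
With $m=\Theta(n^{\alpha})$ one has $\log_2(n/m)=\Theta(\log n)$, whence $n H_2(m/n)=\Theta\!\left(n^{\alpha}\log n\right)$. Because $\alpha<1$, this is $o(n)$, so $\log_2\!\left(\sum_{i=0}^{m}\binom{n}{i}\right)\in o(n)$ and therefore $n-\log_2(\cdot)\ge n-o(n)$, as claimed.

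For the regime $m=\beta n$ with $0<\beta<1/2$ the argument is more direct: the entropy bound gives $\log_2\!\left(\sum_{i=0}^{m}\binom{n}{i}\right)\le n H_2(\beta)$, so that
\begin{equation}
n-\log_2\!\left(\sum_{i=0}^{m}\binom{n}{i}\right)\ge n\bigl(1-H_2(\beta)\bigr).
\end{equation}
Since $H_2$ is strictly increasing on $[0,1/2]$ with $H_2(1/2)=1$, the constant $\beta<1/2$ forces $H_2(\beta)<1$, so $1-H_2(\beta)$ is a positive constant and the right-hand side is $\Omega(n)$.

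The computations are routine, so there is no genuine obstacle; the one point requiring care is to use a sufficiently sharp bound in the second regime. An elementary estimate such as $\sum_{i=0}^{m}\binom{n}{i}\le(m+1)\binom{n}{m}\le(m+1)(en/m)^{m}$ suffices for the first regime — it gives $\log_2(\cdot)\le m\log_2(en/m)+\log_2(m+1)=\Theta(n^{\alpha}\log n)=o(n)$ — but it is too lossy for the second, where for $\beta$ near $1/2$ the factor $(e/\beta)^{\beta n}$ can exceed $2^{n}$ and yield a vacuous bound. Hence the sharp entropy bound $2^{\,nH_2(m/n)}$, with its tight exponent $H_2(\beta)<1$, is precisely what is needed to secure the $\Omega(n)$ conclusion of the second case, and this is the single spot where a cruder argument does not go through.
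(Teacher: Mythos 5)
Your proof is correct: the entropy bound $\sum_{i=0}^{m}\binom{n}{i}\le 2^{nH_2(m/n)}$ (valid since $m/n\le 1/2$ eventually in both regimes) gives $\log_2\bigl(\sum_{i=0}^{m}\binom{n}{i}\bigr)=\Theta(n^{\alpha}\log n)\subset o(n)$ in the first case and the bound $n\bigl(1-H_2(\beta)\bigr)\in\Omega(n)$ in the second, and your remark that the cruder estimate $(m+1)(en/m)^m$ fails for $\beta$ near $1/2$ is accurate. The paper itself gives no inline proof — it simply cites Appendix C.2 of Ref.~\cite{comm} — and your self-contained argument is the standard one matching what that reference does, so there is nothing of substance to flag.
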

\begin{proof}
See Appendix C.2 of Ref.\ \cite{comm}.
\end{proof}

Using these lemmas, we can now prove the following result:
\newtheorem*{thm:Cerror}{Theorem \ref{Cerror}}
\begin{thm:Cerror}
For $m\leq\alpha n$ where $0<\alpha<1/2$ is a constant and $ \gamma\leq (n+1)^{-m} $, $\bar C_{CC}(\EXC_{n,m,\gamma})\in\Omega(n) $.
\end{thm:Cerror}
\begin{proof}
First, let $\epsilon={1}/{\left(\sum_{i=0}^{m}{n\choose i}\right)}$ and note that
\begin{equation}
\frac{1}{\sum_{i=0}^{m}{n\choose i}}\geq\frac{1}{\left(n+1\right)^m}.
\end{equation}
Our goal is to determine how large $S$ can be taken to be in Lemma \ref{EXC error} subject to non-zero error $\epsilon$. Note, that from the proof of Theorem 2 in \cite{comm}, we know that, for any choice of $A_M$, at most $\sum_{i=0}^{m-1} {n \choose i}$ strings can be contained in $S$ without introducing any error. An example of when this occurs is when $A_M$ is such that ${z}(y)={0}$ (the $m$-bit string of all zeros) for all $y$ and $S$ consists of all strings with strictly less than $m$ zeros. What strings can be added into this $S$ while keeping the error below $\epsilon$?

There are ${n\choose m}$ strings such that $\mathcal{M}_y\left({x}\right)={0}$ for precisely one value of $y$. These are the strings with precisely $m$ zeros. If we define $S$ as:
\begin{equation}
S=\left\{{x}:{x}\in\{0,1\}^n, \sum_{i=1}^{n} x_i \geq n-m\right\},
\end{equation}
then the fraction of $\{({x},y): {x} \in S, y\textrm{ subset of }[n]\textrm{ of size }m\}$ such that  $z(y)={0}$ is an incorrect answer for ${x}$ is given by:
\begin{equation}
\frac{{n \choose m}}{{n \choose m}\sum_{i=0}^{m}{n \choose i}}=\epsilon.
\end{equation}
As $S$ consists of the maximum number of strings that produce no error and strings that produce only one error, it is clear that this is the largest $S$ can be taken to be for error given by $\epsilon$. Thus, by Lemma \ref{EXC error}:
\begin{align}
\bar C_{CC}\left(\EXC_{n,m,\epsilon}\right)&\in\Omega\left(n-\log_2\left|S\right|\right),\nonumber\\
&=\Omega\left(n-\log_2\left(\sum_{i=0}^{m} {n \choose i}\right)\right).
\end{align}
By Lemma \ref{lem: bin sum}, for $m\in\Theta(n^\alpha), 1/2<\alpha<1$, we obtain: 
\begin{equation}
\bar C_{CC}\left(\EXC_{n,m,\epsilon}\right)\in\Omega\left(n\right).
\end{equation}
Finally, as $\epsilon\geq\left(n+1\right)^{-m}$, the scaling holds for error parametrized by $\gamma$ as given in the statement of the theorem.
\end{proof}

\section{Detailed Proof of Theorem \ref{compression}}\label{appd}
In the PJO quantum strategy \cite{comm}, upon receiving ${x}$, Alice sends the state
\begin{align}
\ket{\Phi(x)}&=\bigotimes_{i=1}^{n}\left[ \cos\left(\frac{\theta_m}{2}\right)\ket{0}+\left(-1\right)^{x_i} \sin\left(\frac{\theta_m}{2}\right)\ket{1}\right]\nonumber\\
&=\sum_{{r}\in\{0,1\}^n} \left(-1\right)^{{x}\cdot{r}} \left[\cos\left(\frac{\theta_m}{2}\right)\right]^{n-|{r}|}\left[\sin\left(\frac{\theta_m}{2}\right)\right]^{|{r}|} \ket{{r}} \label{Eq:Full State}
\end{align}
where $\theta_m=2\tan^{-1}\left(2^{1/m}-1\right)$.

Suppose that instead of directly sending $\ket{\Phi(x)}$, Alice compresses the message by projecting the state onto the space spanned by the computational basis vectors with with Hamming weight (the number of ones) at most $k$. The compressed quantum message reads as
\begin{equation}
\ket{\Phi^{(k)}(x)}=\frac{1}{\sqrt{A_k}}\sum_{\substack{{r}\in\{0,1\}^n\\|{r}|\leq k}} \left(-1\right)^{{x}\cdot{r}} \left[\cos\left(\frac{\theta_m}{2}\right)\right]^{n-|{r}|}\left[\sin\left(\frac{\theta_m}{2}\right)\right]^{|{r}|} \ket{{r}}, \label{eq:Compressed State1}
\end{equation}
where
\begin{equation}
A_k=\sum_{i=0}^{k}{n \choose i}\left[\cos\left(\frac{\theta_m}{2}\right)\right]^{2\left(n-i\right)}\left[\sin\left(\frac{\theta_m}{2}\right)\right]^{2i}.\end{equation}
This compression reduces the number of qubits Alice sends to $\log\left(\sum_{i=0}^{k} {n \choose i}\right)$. Assuming that Bob performs the same measurement on the qubits specified by $y$ as he would without the compression:
\begin{equation}
\ket{\zeta(z)}=\frac{1}{\sqrt{2^m}}\left(\ket{{0}}-\sum_{{s}\neq{0}} \left(-1\right)^{z\cdot{s}}\ket{{s}}\right), 
\end{equation}
this would lead to some probability of error, $\epsilon_k$. If $\rho^{k}_{{x},y}=\textrm{Tr}_{\backslash y}\left[\ketbra{\Phi^{(k)}(x)}{\Phi^{(k)}(x)}\right]$ denotes the state sent by Alice restricted to the locations specified by $y$, then
\begin{equation}\label{ek}
\epsilon_k=\bra{\zeta(\mathcal{M}_y(x))}\rho^{k}_{{x},y}\ket{\zeta(\mathcal{M}_y(x))}.
\end{equation}
To bound $\epsilon_k$, we make use of the following lemma:
\begin{lem} \label{lemma:trace distance}
For $\ket{\Phi(x)}$, $\ket{\Phi^{(k)}(x)}$ and $\epsilon_{k}$ respectively defined in Eqs. (\ref{Eq:Full State}), (\ref{eq:Compressed State1}) and (\ref{ek}):
\begin{equation}
\sqrt{1-\left|\bracket{\Phi(x)}{\Phi^{(k)}(x)}\right|^2}\geq \epsilon_{k}.
\end{equation}
Note that $\bracket{\Phi(x)}{\Phi^{(k)}(x)}$ is independent of ${x}$.
\end{lem}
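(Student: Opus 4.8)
The plan is to read $\epsilon_k$ as the probability that one fixed measurement operator, applied to the reduced compressed message, yields the forbidden outcome, and then to bound that probability by comparison against the uncompressed message, for which the same probability is exactly zero. First I would set $P\equiv\ketbra{\zeta(\mathcal{M}_y(x))}{\zeta(\mathcal{M}_y(x))}$, noting that $0\leq P\leq I$ since $\ket{\zeta(z)}$ is normalized. By Eq.~(\ref{ek}) we have $\epsilon_k=\mathrm{Tr}(P\rho^k_{x,y})$. The crucial observation is that the \emph{uncompressed} analogue vanishes: because $\ket{\Phi(x)}$ is a product state across all $n$ qubits, its restriction to the $m$ sites in $y$ is the pure state $\ketbra{\Phi(\mathcal{M}_y(x))}{\Phi(\mathcal{M}_y(x))}$, so $\mathrm{Tr}(P\rho_{x,y})=|\bracket{\zeta(\mathcal{M}_y(x))}{\Phi(\mathcal{M}_y(x))}|^2=0$ by the conclusive-exclusion identity $\bracket{\Phi(\mathcal{M}_y(x))}{\zeta(\mathcal{M}_y(x))}=0$ that underlies the zero-error PJO strategy.

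The core of the argument is then a short trace-distance estimate. Using $\mathrm{Tr}(P\rho_{x,y})=0$ and the operational characterization $D(\rho,\sigma)=\max_{0\leq Q\leq I}\mathrm{Tr}\!\left(Q(\rho-\sigma)\right)$ of the trace distance, I would write
\begin{equation}
\epsilon_k=\mathrm{Tr}\!\left(P\left(\rho^k_{x,y}-\rho_{x,y}\right)\right)\leq D\!\left(\rho^k_{x,y},\rho_{x,y}\right).
\end{equation}
Next, since $\rho^k_{x,y}$ and $\rho_{x,y}$ are obtained from $\ket{\Phi^{(k)}(x)}$ and $\ket{\Phi(x)}$ by the same partial trace $\mathrm{Tr}_{\backslash y}$, contractivity of the trace distance under this trace-preserving, completely positive map \cite{nc} gives $D(\rho^k_{x,y},\rho_{x,y})\leq D(\ket{\Phi^{(k)}(x)},\ket{\Phi(x)})$. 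Finally, invoking the closed form of the trace distance for pure states (as already used in Lemma~\ref{pert}),
\begin{equation}
D\!\left(\ket{\Phi(x)},\ket{\Phi^{(k)}(x)}\right)=\sqrt{1-\left|\bracket{\Phi(x)}{\Phi^{(k)}(x)}\right|^2},
\end{equation}
and chaining the three bounds yields the claimed inequality.

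The final assertion that $\bracket{\Phi(x)}{\Phi^{(k)}(x)}$ is independent of $x$ drops out of Eqs.~(\ref{Eq:Full State}) and (\ref{eq:Compressed State1}): since $\ket{\Phi^{(k)}(x)}$ is $A_k^{-1/2}$ times the projection of $\ket{\Phi(x)}$ onto the Hamming-weight-$\leq k$ subspace, the overlap equals $\sqrt{A_k}$, and $A_k$ depends only on $n$, $k$, and $\theta_m$. I do not expect a genuine obstacle here; the entire content is in correctly identifying $\rho_{x,y}$ as the relevant pure exclusion state and invoking the standard contractivity and operational properties of the trace distance. The only point demanding care is that $P$ acts on the $y$-subsystem, so the partial trace over the complementary qubits must be performed before the two reduced states are compared. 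The quantitative task of bounding $1-|\bracket{\Phi(x)}{\Phi^{(k)}(x)}|^2=1-A_k$ and choosing $k=m^{1+\eta}$ is left to the subsequent estimates.
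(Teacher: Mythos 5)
Your proof is correct and follows essentially the same route as the paper's: write $\epsilon_k$ as the deviation from the vanishing uncompressed quantity $\bra{\zeta(\mathcal{M}_y(x))}\rho^{n}_{x,y}\ket{\zeta(\mathcal{M}_y(x))}=0$, bound it by the operational characterization of trace distance, contract under the partial trace $\mathrm{Tr}_{\backslash y}$, and finish with the pure-state formula $D=\sqrt{1-|\bracket{\Phi(x)}{\Phi^{(k)}(x)}|^2}$. The only differences are cosmetic — you maximize over POVM elements $0\leq Q\leq I$ where the paper cites the projector version of Eq.\ (9.22) in \cite{nc}, and you make explicit both the product-state/conclusive-exclusion reason the uncompressed term vanishes and the $x$-independence of the overlap (equal to $\sqrt{A_k}$), which the paper leaves implicit.
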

\begin{proof}
Recall that the trace distance between two density matrices, $\rho$ and $\sigma$, is given by
\begin{equation}
D\left(\rho,\sigma\right)=\frac{1}{2}\textrm{Tr}\left[\sqrt{\left(\rho-\sigma\right)^{\dag}\left(\rho-\sigma\right)}\right].
\end{equation}
For pure states, $\ket{\psi}$ and $\ket{\phi}$, this reduces to
\begin{equation}
D\left(\ket{\psi},\ket{\phi}\right)=\sqrt{1-\left|\bracket{\psi}{\phi}\right|^2}.
\end{equation}
We will also need the following facts. Firstly, as the trace distance never increases under local operations, for bipartite states, $\rho_{AB}$ and $\sigma_{AB}$:
\begin{equation}
D\left(\rho_{AB},\sigma_{AB}\right)\geq D\left(\rho_{A},\sigma_{A}\right).
\end{equation}
Second, by Eq.\ (9.22) in \cite{nc}:
\begin{equation}
D\left(\rho,\sigma\right)=\max_P \textrm{Tr}\left[P\left(\rho-\sigma\right)\right],
\end{equation}
where the maximization is taken over all projectors $P$.
Combining these facts, we obtain:
\begin{align}
\epsilon_k&=\bra{\zeta(\mathcal{M}_y(x))}\rho^{k}_{{x},y}\ket{\zeta(\mathcal{M}_y(x))}\nonumber\\
&=\bra{\zeta(\mathcal{M}_y(x))}\rho^{k}_{{x},y}\ket{\zeta(\mathcal{M}_y(x))}
-\bra{\zeta(\mathcal{M}_y(x))}\rho^{n}_{{x},y}\ket{\zeta(\mathcal{M}_y(x))}\nonumber\\
&\leq D\left(\rho^{k}_{{x},y},\rho^{n}_{{x},y}\right)\nonumber\\
&\leq D\left(\ket{\Phi^{(k)}(x)},\ket{\Phi(x)}\right)\nonumber\\
&=\sqrt{1-\left|\bracket{\Phi(x)}{\Phi^{(k)}(x)}\right|^2},
\end{align}
as required.
\end{proof}

Lemma \ref{lemma:trace distance} enables us to prove the following theorem:

\newtheorem*{thm:QCC}{Theorem \ref{compression}}
\begin{thm:QCC}
For $ m\in \Theta(n^\alpha) $, $ 1/2<\alpha<1 $ and $ \gamma\geq(n+1)^{-m}$, $\bar Q_{CC}(\EXC_{n,m,\gamma})\in O(m^{1+\delta}) $ for any $\delta>0 $.
\end{thm:QCC}
\begin{proof}
\begin{align}
\sqrt{1-\left|\bracket{\Phi(x)}{\Phi^{(k)}(x)}\right|^2}&=\sqrt{1-\sum_{i=0}^{k}{n \choose i}\left[\cos\left(\frac{\theta_m}{2}\right)\right]^{2n-2i}\left[\sin\left(\frac{\theta_m}{2}\right)\right]^{2i}}\nonumber\\
&=\sqrt{\sum_{i=k+1}^{n}{n \choose i}\left[\cos\left(\frac{\theta_m}{2}\right)\right]^{2n-2i}\left[\sin\left(\frac{\theta_m}{2}\right)\right]^{2i}}.
\end{align}
Now,
\begin{align}
{n \choose i}&\leq \left(\frac{ne}{i}\right)^i,\\
\cos^2\left(\frac{\theta_m}{2}\right)&\leq 1, \\
\sin^2\left(\frac{\theta_m}{2}\right)&<\frac{1}{m^2}, \quad \textrm{for large }m;
\end{align}
so, for large $m$,
\begin{align}
1-\left|\bracket{\Phi(x)}{\Phi^{(k)}(x)}\right|^2&<\sum_{i=k+1}^{n}\left(\frac{ne}{i}\right)^i\left(\frac{1}{m}\right)^{2i}\nonumber\\
&\leq \left(n+1\right)\left(\frac{ne}{m^2k}\right)^k,
\end{align}
as the $i=k+1$ term decays slowest for $m\in\omega\left(\sqrt{n}\right)$.
For this bound to be less than $\gamma^2=\left(n+1\right)^{-2m}$, we require:
\begin{align}
\left(\frac{m^2 k}{ne}\right)^k &> \left(n+1\right)^{2m+1}\\
k \log\left(\frac{m^2 k}{ne}\right) &> \left(2m+1\right)\log\left(n+1\right).
\end{align}
To satisfy this asymptotically, it suffices to take $k=m^{1+\eta}$ with any $\eta>0$. The number of qubits sent (which, by Lemma \ref{lemma:trace distance}, achieves a probability of error $\leq\left(n+1\right)^{-m}$) is then:
\begin{align}
\log\left(\sum_{i=0}^{m^{1+\beta}}{n \choose i}\right) &\leq \log\left(\left(n+1\right)^{m^{1+\beta}}\right)\nonumber\\
&=m^{1+\beta}\log\left(n+1\right).
\end{align}
This can scale as $O(m^{1+\delta})$ for any $\delta>0$, by choosing some $\eta>\delta$. Thus, $\bar Q_{CC}\left(\text{EXC}_{n,m,\gamma}\right)\in O\left(m^{1+\delta}\right)$ for any $\delta>0$.
\end{proof}

\end{document}